\newcolumntype{L}[1]{>{\raggedright\let\newline\\\arraybackslash\hspace{0pt}}m{#1}}
\newcolumntype{C}[1]{>{\centering\let\newline\\\arraybackslash\hspace{0pt}}m{#1}}
\newcolumntype{R}[1]{>{\raggedleft\let\newline\\\arraybackslash\hspace{0pt}}m{#1}}
\newcommand{\comment}[1]{}
\DeclarePairedDelimiter{\ceil}{\lceil}{\rceil}
\newcommand{\cG}{\mathcal{G}}
\newcommand{\bO}{\mathbb{O}}
\newcommand{\bR}{\mathbb{R}}
\newcommand{\bZ}{\mathbb{Z}}
\newcommand{\bQ}{\mathbb{Q}}
\newcommand{\bS}{\mathbb{S}}
\def\grasse#1{{\llbracket #1 \rrbracket}}
\def\ok#1{\mbox{\raisebox{0ex}[1ex][1ex]{$#1$}}}
\def \tuple#1{\langle #1 \rangle}
\newcommand{\Lra}{\Leftrightarrow}
\newcommand{\Ra}{\Rightarrow}
\newcommand{\ra}{\rightarrow}
\newcommand{\fwedge}{f_{\scriptscriptstyle \!\wedge}}
\newcommand{\fvee}{f_{\scriptscriptstyle \!\vee}}
\newcommand{\wpwedge}{\wp^{\scriptscriptstyle \wedge}}
\newcommand{\wpvee}{\wp^{\scriptscriptstyle \vee}}
\newcommand{\gwedge}{g_{\scriptscriptstyle \!\wedge}}
\newcommand{\pem}{\preceq_{\mathit{EM}}}
\DeclareMathOperator{\Abs}{Abs}
\DeclareMathOperator{\lfp}{lfp}
\DeclareMathOperator{\gfp}{gfp}
\DeclareMathOperator{\uco}{uco}
\DeclareMathOperator{\lne}{leq}
\DeclareMathOperator{\gne}{geq}
\DeclareMathOperator{\id}{id}
\DeclareMathOperator{\argmax}{argmax}
\DeclareMathOperator{\sgn}{\mathit{sgn}}
\DeclareMathOperator{\Sem}{Sem}
\DeclareMathOperator{\cl}{cl}
\DeclareMathOperator{\EM}{\mathit{EM}}
\DeclareMathOperator{\SL}{SL}
\DeclareMathOperator{\Fix}{Fix}
\DeclareMathOperator{\Eq}{Eq}
\newcommand{\triangleRel}{\mathrel{\overset\triangle\Longleftrightarrow}}
\newcommand{\ud}{\triangleq}
\newtheorem{theorem}{Theorem}[section]
\newtheorem{lemma}[theorem]{Lemma}
\newtheorem{corollary}[theorem]{Corollary}
\newtheorem{example}[theorem]{Example} 
\newtheorem{definition}[theorem]{Definition}
\begin{document}

\title{Abstract Interpretation of Supermodular Games}
\author{\href{http://www.math.unipd.it/en/people/francesco.ranzato}{Francesco Ranzato}}
\affil{{\small Dipartimento di Matematica, University of Padova, Italy}}
\date{}

\maketitle

\begin{abstract}
Supermodular games find significant applications in a variety of models, especially in operations research and 
economic applications of noncooperative game theory, and 
feature pure strategy Nash equilibria characterized as fixed points of multivalued functions on complete lattices. Pure strategy Nash equilibria of supermodular games are here approximated by resorting to the theory of abstract interpretation, a well established and known framework used for designing static analyses of programming languages. This is obtained by
extending the theory of abstract interpretation in order to handle approximations of multivalued functions  and 
by providing some methods for abstracting supermodular games, in order to obtain approximate Nash equilibria which are shown to be correct 
within the abstract interpretation framework. 
\end{abstract}

\section{Introduction}\label{intro}

\paragraph{Motivations.} Games may have strategic
complementarities, which means, roughly speaking, 
that best responses of players have monotonic reactions, reflecting a complementarity relationship between own actions and rivals' actions. 
Games with strategic complementarities occur in a large array of
models, especially in operations research and 
economic applications of noncooperative game theory, a significant sample of them is described by Topkis' book \cite{topkis98}.
Pionereed by Topkis \cite{topkis78}, this class of games is formalized by supermodular games, where the payoff functions of each player 
have the lattice-theoretical 
properties of supermodularity and increasing differences. In a supermodular game, the strategy space of every player is partially ordered and 
is assumed to be a complete lattice, while the utility in playing a higher strategy increases when the opponents also play higher strategies. 
It turns out that pure strategy Nash equilibria of supermodular games exist and form a complete lattice w.r.t.\ the ordering relation 
of the strategy space, thus exhibiting
the least and greatest Nash equilibria. Furthermore, since the best response correspondence of a supermodular game satisfies a monotonicity
hypothesis, its least and greatest equilibria can be characterized and, under some assumptions of finiteness, calculated 
as least and greatest fixed points by the well-known lattice-theoretical Knaster-Tarski fixed point theorem, which provides the theoretical basis 
for the Robinson-Topkis algorithm \cite{topkis98}. 

Since the breakthrough on the PPAD-completeness of finding mixed Nash equilibria~\cite{dgp09}, 
the question of approximating Nash equilibria
emerged as a key problem in algorithmic game theory~\cite{dmp07,hk11}. In this context, approximate equilibrium refers to $\epsilon$-approximation, 
with $\epsilon>0$, 
meaning that, for each player, all the strategies have a payoff which is at most $\epsilon$ more (or less) than the precise payoff of the given strategy. 
It is well known that the notion of correct (a.k.a.\ sound) approximation is fundamental in static program analysis, one major research area 
in programming language
theory and design. 
Static program analysis derives some partial but correct information of the run-time program behavior 
without actually executing programs. Prominent examples of static analysis include dataflow analysis used in program compilers,
type systems for inferring program types, model checking for program verification, and abstract interpretation used to design
abstract interpreters of programs. In particular, the abstract interpretation approach to static analysis~\cite{CC77,CC79} 
relies on a lattice-theoretical model of the notion of approximation. Program properties are modelled by 
a domain $C$ endowed with a partial order  $\leq$ which plays the role of approximation relation, where $x\leq y$
intuitively means that the property $y$ is an approximation of the property $x$, or, equivalently, that the property $x$ is 
logically stronger than $y$. 
The key principle in static analysis by abstract interpretation is to provide an approximate interpretation, a.k.a.\ an abstract interpretation,
of a program for a given abstraction of the properties of its concrete semantics. This leads to the idea of abstract domain, 
which is an ordered collection of abstract program properties 
which can be inferred by static analysis,
where approximation is again modeled by the ordering relation.  The classical introductory example of
program abstract interpretation is sign analysis. 
Given an arithmetic integer expression $e$, one tries to bound its sign---negative, zero or positive---without actually computing $e$.
The idea is that one can prove that $e\equiv 3\times -2$ is negative without actually computing that $e$ evaluates to $-6$. 
If $\bS=\{-,0,+\}$ then abstract integers in $A$ are defined as subsets of these signs in $\bS$, i.e., $A\ud \wp(\bS)$. 
Here, $A$ is ordered by inclusion which encodes the approximation relation: for example, $\{+\} \subseteq \{0,+\}$ encodes 
that being positive is a stronger property than being nonnegative, so that nonnegative is an approximation of positive. 
Then, any set of integer numbers $S\in \wp(\bZ)$ can be abstractly represented by
its most precise abstraction in $A$ through an abstraction function $\alpha:\wp(\bZ) \ra A$. Hence, a set of integers $S$ is correctly
approximated by an abstract integer $a\in A$ precisely when $\alpha(S) \subseteq a$ holds. In turn,
one can define abstract addition $\oplus$ and multiplication $\otimes$ on abstract integers in $A$: for example, $\{-,0\} \oplus \{-\}=
\{-\}$ and $\{-\} \oplus \{+\} = \{-,0,+\}$, while $\{-\}\otimes\{+,0\} = \{-,0\}$ and $\{-,+\}\otimes \{0\} = \{0\}$. 
Hence, in order to analyze the expression $3\times -2$ we convert it to  $\alpha(\{3\}) \otimes \alpha(\{-2\})$ to infer $\{-\}$. Of course, 
it may well happen that the abstract domain does not carry enough precision to compute the most precise information theoretically
available in $A$: for the expression $-2 + 2$, we have that $\alpha(\{-2\}) \oplus \alpha(\{2\}) = \{-,0,+\}$ although
$\alpha(\{-2+2\})=\{0\} \subsetneq \alpha(\{-2\}) \oplus \alpha(\{2\})$. In such cases, the output of the static analysis is ``I don't know''. In the terminology
of abstract interpretation, $\oplus$ and $\otimes$ are correct approximations of concrete integer addition and multiplication. 
Program semantics are typically formalized using fixed points of functions for modelling loops and recursive procedures. A
basic result of abstract interpretation tells us that correctness is preserved for least and greatest fixed points: 
if a concrete monotone function $f:C\ra C$ is correctly approximated by an abstract monotone function $f^\sharp:A\ra A$ on an
abstraction $A$ of $C$ then the least (or greatest) fixed point $\lfp(f)\in C$ of $f$ is correctly approximated by the least (or greatest) fixed point 
$\lfp(f^\sharp)\in A$  of $f^\sharp$,
i.e., $\alpha(\lfp(f)) \leq_A \lfp(f^\sharp)$. For example, the concrete output of 
the
program $P\equiv x:=3; \textbf{while}~(x < 13)~\textbf{do}~x:=2*x$ is $\{24\}$, while
its abstract interpretation is derived as the least fixed point which is 
greater than or equal to the initial abstract value $\alpha(\{3\})=\{+\}$
for the function
$f^\sharp:A\ra A$ defined by $f^\sharp (a) = \alpha(\{2\}) \otimes a$, so that this least fixed point is $\lfp_{\geq \{+\}}(f^\sharp)= \{+\}$, 
and in this case we have that $\alpha(\{24\}) =  \lfp_{\geq \{+\}}(f^\sharp)$. 

\paragraph{Goal.}
The similarities between supermodular games and formal program semantics should be therefore clear, since they 
both rely on order-theoretical models and on computing extremal fixed points of suitable functions on lattices. 
However, while the order theory-based 
approximation of program semantics by static analysis
is a traditional and well-established area in computer science since forty years, to the best of our knowledge, no attempt has been made
to apply some techniques used in static program analysis for defining a corresponding notion of approximation in supermodular games. 
The overall goal of this paper is to investigate whether and how abstract interpretation can be used to define and calculate approximate Nash equilibria
of supermodular games, where the key notion of approximation will be modeled by a partial ordering relation similarly to what happens in static program 
analysis.  This appears to be the first contribution to make use of an order-theoretical notion
of approximation for equilibria of supermodular games, in particular by resorting to the abstract interpretation technique
ordinarily used in static program analysis.  

\paragraph{Contributions.}
As sketched above, abstract interpretation essentially relies on: (1)~abstract domains $A$ which encode approximate properties; 
(2)~abstract functions $f^\sharp$ which must correctly approximate on $A$ the behavior of some
concrete operations $f$; (3)~results of correctness for the abstract interpreter using $A$ and $f^\sharp$, for example 
the correctness of extremal fixed points of abstract functions, e.g.\ $\lfp(f^\sharp)$ correctly approximates $\lfp(f)$; 
(4)~so-called widening/narrowing operators tailored for the abstract domains $A$ to ensure and/or accelerate the convergence in iterative 
fixed point computations of abstract functions $f^\sharp$.  We contribute to set up a general framework for designing abstract interpretations
of supermodular games which basically encompasses the above points~(1)-(3), while widening/narrowing operators are not taken into account
since their definition is closely related to some individual abstract domain. Our main contributions can be summarized as follows.

\begin{itemize}
\item In supermodular games, a strategy space $S_i$ for the player $i$ 
is assumed to be a complete lattice and best response correspondences are (multivalued) functions
defined over a product $S_1 \times \cdots \times S_N$ of complete lattices which plays the role
of concrete domain. Thus, as a preliminary step, we show how abstractions of strategy spaces  
can be composed in order to define an abstract domain
of the product $S_1 \times \cdots \times S_N$, and, on the other hand, an abstraction of the product $S_1 \times \cdots \times S_N$ 
can be decomposed into abstract domains of the individual $S_i$'s. 

\item Abstract interpretation is commonly used for approximating single-valued functions on complete lattices. For supermodular games, 
best responses are indeed multivalued functions $B:S_1 \times \cdots \times S_N \ra \wp(S_1 \times \cdots \times S_N)$ that we expect to
approximate. Thus, we first provide short and direct constructive proofs ensuring the existence of fixed points for multivalued
functions. Then, we show how abstract interpretation can be generalized to handle multivalued functions, first by defining a parametric
notion of correct approximation for multivalued functions, and then by proving that these correct abstract multivalued functions preserve
their correctness for their fixed points. 

\item We investigate how to define an ``abstract interpreter'' of a supermodular game. The first approach consists in defining 
a supermodular game on an abstract strategy space. Given a game $\Gamma$ with strategy spaces $S_i$ and
utility functions $u_i: S_1 \times \cdots \times S_N \ra \bR$, this means that we assume a family of abstractions $A_i$, one for each $S_i$, 
that gives rise to an abstract strategy space $A=A_1 \times \cdots \times A_N$, 
and a suitable abstract restriction of the utility functions $u_i^A:A_1 \times \cdots \times A_N \ra \bR$. This defines 
what we call an abstract game $\Gamma^A$, which, under some conditions, has abstract equilibria which correctly approximate the equilibria of
$\Gamma$. Obviously, the fixed point computations over $A$ for the abstract game $\Gamma^A$ should be more efficient than in $\Gamma$. 
This abstraction technique provides a generalization of the efficient algorithm by Echenique~\cite{ech2007} for finding all equilibria in
a finite game with strategic complementarities. 

\item On the other hand, we put forward a second notion of abstract game where the strategy spaces are subject to a kind of
partial approximation, meaning that, for any utility function, we consider approximations of the strategy spaces
of the ``other players'', i.e., correct approximations over abstract domains $A_i$ of the 
functions $u_i(s_i, \cdot): S_1\times \cdots S_{i-1}\times S_{i+1}\times \cdots \times S_N\ra \bR$, for any given strategy $s_i\in S_i$.
This abstraction technique gives rise to games having an abstract best response correspondence. 
This approach is inspired and somehow generalizes the implicit methodology of approximate computation of 
equilibria 
considered by Carl and Heikkil{\"{a}} \cite[Chapter~8]{ch11}.
\end{itemize}

\noindent
Our results are illustrated on some examples of supermodular games, in particular a couple of examples of 
Bertrand oligopoly models are taken from Carl and Heikkil{\"{a}}'s book \cite{ch11}.

\section{Background}

\subsection{Order-Theoretical Notions}
Given a function $f : X \ra Y$ and a subset $S\subseteq X$ then 
$\ok f(S) \ud \{f(s)\in Y~|~s\in S\}$ denotes the image of $f$ on $S$ and
$f^s:\wp(X)\ra \wp(Y)$ denotes the corresponding standard powerset lifting of $f$,
that is,  $f^s (S) \ud f(S)$. 
Given a family of $N>0$ sets $(S_i)_{i=1}^N$, $\times_{i=1}^N S_i$ denotes their Cartesian product. If $i\in [1,N]$ and 
$s\in \times_{i=1}^N S_i$ then $S_{-i} \ud S_1 \times \cdots \times S_{i-1}\times S_{i+1}\times \cdots S_{N}$, while
$s_{-i}\ud (s_1,\ldots, s_{i-1},s_{i+1},\ldots,s_{N})\in S_{-i}$. 
Also, $\tuple{\bR^N,\leq}$ denotes the standard product poset of real numbers, where for $s,t\in \bR^N$, $s\leq t$ iff 
for any $i\in [1,N]$, $s_i\leq t_i$, while $s+t=(s_i+t_i)_{i=1}^N$. 
A multivalued function, also called correspondence, is a mapping $f:X\ra \wp(X)$. An element $x\in X$ is a fixed point of $f$ when $x\in f(x)$,
where $\Fix(f)\ud \{x\in X~|~ x\in f(x)\}$.

Let $\tuple{C,\leq,\wedge,\vee,\bot,\top}$ be a complete lattice, compactly denoted  by $\tuple{C,\leq}$.  
A nonempty subset $S\subseteq C$ is a subcomplete sublattice of $C$ if for all its nonempty subsets $X\subseteq S$, $\wedge X\in S$ 
and $\vee X \in S$. 
Let us recall the following relations on the powerset $\wp(C)$:
for any $X,Y\in \wp(C)$,
\begin{align*}
\text{(Smyth preorder)} \qquad & X\preceq_S Y \;\triangleRel\; \forall y\in Y.\exists x\in X.\: x\leq y \\
\text{(Hoare preorder)}\qquad & X\preceq_H Y \;\triangleRel\; \forall x\in X.\exists y\in Y.\: x\leq y \\
\text{(Egli-Milner preorder)} \qquad& X\pem Y \;\triangleRel\; X\preceq_S Y \;\:\&\;\: X\preceq_H Y \\
\text{(Veinott relation)}\qquad & X\preceq_V Y \;\triangleRel\; \forall x\in X.\forall y\in Y.\: x\wedge y\in X \;\:\&\;\: x\vee y \in Y
\end{align*}
Smyth, Hoare and Egli-Milner relations are preorders (i.e., reflexive and transitive), 
while Veinott relation (also called strong set relation) is transitive and antisymmetric. 
A multivalued function $f:C\ra \wp(C)$ is $S$-monotone if for any $x,y\in C$, $x\leq y$ implies $f(x) \preceq_S f(y)$. 
$H$-, $\EM$- and $V$-monotonicity are defined analogously. 
We also use the following notations:
\[
\begin{array}{c}
\wp^{\scriptscriptstyle \wedge}(C) \ud \{ X\in \wp(C)~|~ \wedge\! X \in X\}
\qquad \wpvee(C) \ud \{ X\in \wp(C)~|~ \vee\! X \in X\}\\[5pt]
\wp^\diamond(C)  \ud \wpwedge(C) \cap \wpvee(C) 
\qquad \SL(C)  \ud \{X\in\wp(C) ~|~ X \neq \varnothing,\, X \text{~subcomplete sublattice of~} C\}
\end{array}
\]
Observe that if $X,Y\in \wpwedge(C)$ then $X\preceq_S Y\:\Lra\: \wedge X \leq \wedge Y$. Similarly, 
if $X,Y\in \wpvee(C)$ then $X\preceq_H Y\:\Lra\: \vee X \leq \vee Y$ and 
if $X,Y\in \wp^\diamond(C)$ then $X\preceq_{\EM} Y\:\Lra\: \wedge X \leq \wedge Y \;\&\: \vee\!\! X \leq \vee Y$.

The pointwise ordering relation $\sqsubseteq$ between two functions $f,g: X \ra C$ whose range $C$ is a complete lattice,  
is defined by $f\sqsubseteq g$ if for any $x\in X$, $f(x)\leq_C g(x)$.  
A function $f:C\ra D$ between complete lattices is additive (co-additive) when $f$ preserves arbitrary lub's (glb's).
Given a function $f:C\ra C$ on a complete lattice $C$, $\Fix(f)\ud \{x\in C~|~x=f(x)\}$ denotes the set of fixed points of $f$, while
$\lfp(f)$ and $\gfp(f)$ denote, respectively, the least and greatest fixed points of $f$, when they exist. Let us recall that
least and greatest fixed points always exist for monotone functions.  
If $f:C\ra C$ then for any ordinal $\alpha\in \bO$, the $\alpha$-power $f^\alpha: C\ra C$ is defined by transfinite induction
as follows:
for any $x\in C$,  (1)~if $\alpha=0$ then $f^0 (x) \ud x$; (2)~if $\alpha=\beta+1$ then $f^{\beta+1} (x) \ud f(f^\beta (x))$; 
(3)~if $\alpha = \vee \{\beta\in \bO~|~\beta<\alpha\}$ then $f^\alpha(x) \ud \bigvee_{\beta <\alpha} f^\beta(x)$.

A map $\rho:C\ra C$, with $C$ complete lattice, 
is a (topological) closure operator when: (i)~$x\leq y\:\Ra\: \rho(x)\leq\rho(y)$; (ii)~$x\leq\rho(x)$;
(iii)~$\rho(\rho(x))=\rho(x)$. We denote by $\uco(\tuple{C,\leq})$ the set of all closure operators on the complete lattice $C$.   
A closure operator $\rho\in \uco(C)$ is uniquely determined by its image $\rho(C)$, which coincides with its set of fixed points $\Fix(\rho)$, 
as follows: for any $c\in C$, $\rho(c) =\wedge_C \{x\in \rho(C)~|~ c\leq x\}$. Also, a subset $S\subseteq C$ is the image of a closure
operator $\rho_S\in \uco(C)$ iff $S$ is meet-closed, i.e., $S=\{\wedge_C X \in C~|~ X\subseteq S\}$; in this case, $\rho_S(c)= \wedge_C \{x\in S~|~ c\leq x\}$.

\paragraph*{Supermodularity.} 
Given a complete lattice $C$,
a function $u:C\ra \bR^N$ is a supermodular if for any $c_1,c_2\in C$, $u(c_1 \vee c_2) + u(c_1 \wedge c_2) \geq u(c_1) + u(c_2)$, while
$u$ is  quasisupermodular if for any $c_1,c_2\in C$, 
$u(c_1 \wedge c_2)\leq u(c_1) \:\Ra\: u(c_2) \leq  u(c_1 \vee c_2)$ and 
$u(c_1 \wedge c_2)< u(c_1) \:\Ra\: u(c_2) <  u(c_1 \vee c_2)$.
Clearly, supermodularity implies quasisupermodularity (while the converse is not true).  
Recall that if $u:C\ra \bR^N$ is quasisupermodular then $\argmax(f) \ud \{x\in C~|~ \forall y\in C.\: f(y)\leq f(x)\}$ is a sublattice
of $C$. 

A function $u:C_1\times C_2 \ra \bR^N$ has increasing differences when for any $(x,y)\leq (x',y')$, 
$u(x',y) - u(x,y) \leq u(x',y') - u(x,y')$, or, equivalently, the functions $u(x',\cdot)-u(x,\cdot)$ and $u(\cdot,y')-u(\cdot,y)$ are monotone. 
A function $u:C_1\times C_2 \ra \bR^N$
has the single crossing property when for any $(x,y)\leq (x',y')$, 
$u(x,y)\leq u(x',y) \:\Ra\: u(x,y') \leq u(x',y')$ and $u(x,y)< u(x',y) \:\Ra\: u(x,y') < u(x',y')$.
Clearly, if $u$ has increasing differences then $u$ has the single crossing property, while the converse does not hold. 

Supermodularity on product complete lattices and increasing differences are related as follows:
a function  $u:C_1\times C_2\ra \bR^N$ is supermodular if and only if 
$u$ 
has increasing differences and, for any $c_i\in C_i$, $u(c_1,\cdot):C_2 \ra \bR^N$
and $u(\cdot,c_2):C_1 \ra \bR^N$ are supermodular.

\subsection{Noncooperative Games}
In our model, a noncooperative game 
$\Gamma = \langle S_i,u_i \rangle_{i=1}^{n}$ for players $i=1,...,n$ consists of a family of feasible strategy spaces $(S_i,\leq_i)_{i=1}^n$ 
which are assumed to be complete lattices, so that 
the strategy space $S\ud \times_{i=1}^{n} S_i$ is a complete lattice for the componentwise order $\leq$,  and of 
a family of utility (or payoff) functions $u_i:S\ra \bR^{N_i}$, with $N_i\geq 1$.
The $i$-th best response correspondence $B_i:S_{-i} \ra \wp(S_i)$
is defined as $B_{i}(s_{-i}) \ud \{x_i\in S_i~|~ \forall s_i\in S_i.\: u_i(s_i,s_{-i}) \leq u_i(x_i,s_{-i})\}$,
while the best response correspondence $B: S\ra \wp(S)$ is defined by
$B(s_1,...,s_n) \ud \times_{i=1}^{n} B_i(s_{-i})$. 
A strategy $s\in S$ is a pure Nash equilibrium for $\Gamma$ when $s$ is a fixed point of $B$, i.e., 
$s\in B(s)$, meaning that in  $s$ 
there is no feasible way for any player to strictly improve its utility if the strategies of all the other players 
remain unchanged.
 We denote by $\Eq(\Gamma)\in \wp(S)$ the
set of Nash equilibria for $\Gamma$, so that $\Eq(\Gamma)=\Fix(B)$. 

\subsubsection{(Quasi)Supermodular Games}\label{qsg}
A noncooperative $\Gamma = \tuple{S_i,u_i}_{i=1}^{n}$ is \emph{supermodular} when: 

\medskip
(1)~for any $i$, for any $s_{-i}\in S_{-i}$, $u_i(\cdot,s_{-i}):S_i\ra \bR^{N_i}$ is supermodular; 

(2)~for any $i$, $u_i(\cdot,\cdot):S_i\times S_{-i} \ra \bR^{N_i}$ has increasing differences. 

\medskip
\noindent
On the other hand, $\Gamma$ is \emph{quasisupermodular} (or, with strategic complementarities) when: 

\medskip
(1)~for any $i$, for any $s_{-i}\in S_{-i}$, $u_i(\cdot,s_{-i}):S_i\ra \bR^{N_i}$ is quasisupermodular;

(2)~for any $i$, $u_i(\cdot,\cdot):S_i\times S_{-i} \ra \bR^{N_i}$ has the single crossing property. 

\medskip
\noindent
In these cases, it turns out (cf.\
\cite[Theorems~2.8.1 and~2.8.6]{topkis98}) that 
the $i$-th best response correspondence $B_i:S_{-i} \ra \wp(S_i)$ is $\EM$-monotone, as well as 
the best response correspondence $B: S\ra \wp(S)$. 

Let us recall that, given  a complete lattice $C$, a 
function $f:C\ra \bR^N$ is order upper semicontinuous if for any chain $Y\subseteq C$, 
\[
\textstyle
\limsup\limits_{x\in Y, x\ra \vee Y} f(x) \leq f(\vee C) \quad\text{and}\quad \limsup\limits_{x\in Y, x\ra \wedge Y} f(x) \leq f(\wedge C).
\]
It turns out (cf.\ \cite[Lemma~4.2.2]{topkis98}) that if 
each $u_i(\cdot,s_{-i}):S_i\ra \bR^{N_i}$ is 
order upper semicontinuous then, for each $s\in S$,  
$B_{i}(s_{-i})\in \SL(S_i)$, i.e., $B_{i}(s_{-i})$ is a nonempty subcomplete sublattice of $S_i$, so that
$B(s) \in \SL(S)$ also holds. In particular, we have that $\wedge_i B_{i}(s_{-i}), \vee_i B_{i}(s_{-i})
\in B_{i}(s_{-i})$ as well as $\wedge B(s), \vee B(s) \in B(s)$, namely, 
$B_i(s_{-i})\in \wp^\diamond(S_i)$ and $B(s) \in \wp^\diamond(S)$.  
It also turns out \cite[Theorem~2]{zhou94} that  $\tuple{\Eq(\Gamma),\leq}$ is a complete lattice---although, in general, 
it is not a subcomplete sublattice of $S$---and therefore $\Gamma$ admits 
the least and greatest Nash equilibria, which are denoted, respectively, 
by $\lne(\Gamma)$ and $\gne(\Gamma)$. 
It should be remarked that the hypothesis of upper semicontinuity for $u_i(\cdot,s_{-i})$ holds 
for any finite-strategy game, namely for those games where each strategy space $S_i$ is finite. In the
following, we will consider (quasi)supermodular games which satisfy this hypothesis of upper semicontinuity. 

If, given any $s_i\in S_i$, the function $u_i(s_i,\cdot):S_{-i}\ra \bR^{N_i}$ is monotone then it turns out 
\cite[Propositions~8.23 and 8.51]{ch11}
that 
$\gne(\Gamma)$ majorizes all equilibria, i.e., for all $i$ and $s\in \Eq(\Gamma)$, $u_i(\gne(\Gamma)) \geq u_i (s)$, while
$\lne(\Gamma)$ minimizes all equilibria.

\subsection{Computing Game Equilibria}\label{comp-sec}
Consider a (quasi)supermodular game $\Gamma = \tuple{S_i,u_i}_{i=1}^{n}$ and define the
functions $B_{\wedge}, B_{\vee}:S\ra S$ as follows: $B_{\wedge}(s) \ud \wedge B(s)$
and $B_{\vee}(s) \ud \vee B(s)$. As recalled in Section~\ref{qsg}, we have that $B_{\wedge}(s), B_{\vee}(s)
\in B(s)$. When the image of the strategy space $S$ for $B_{\wedge}$ turns out to be finite, 
the standard algorithm \cite[Algorithm~4.3.2]{topkis98} for computing $\lne(\Gamma)$ consists in applying the constructive 
Knaster-Tarski fixed point theorem 
to the function $B_{\wedge}$ so that
$\lne(\Gamma) = \bigvee_{k\geq 0} B_{\wedge}^k (\bot_S)$. Dually, 
we have that $\gne(\Gamma) = \bigwedge_{k\geq 0} B_{\vee}^k (\top_S)$. In particular, this procedure can be always used for finite
games. The application of the so-called chaotic iteration in this fixed point computation 
provides the 
Robinson-Topkis (RT) algorithm \cite[Algorithm~4.3.1]{topkis98} in Figure~\ref{fig-algo}, also called round-robin optimization, which is
presented in its version for least fixed points, while the statements in comments provide the version for calculating greatest fixed points. 

\begin{figure}\label{fig-algo}
\begin{flalign*}
&\tuple{s_1,...,s_n} := \tuple{\bot_1,...,\bot_n}; \quad \mathtt{//}\tuple{s_1,...,s_n} := \tuple{\top_1,...,\top_n}; \\
&\textbf{do}~ \big\{\tuple{t_1,...,t_n} := \tuple{s_1,...,s_n};\\
&\qquad s_1 := \wedge_1 B_1(s_{-1}); \;\quad\quad\;\, \mathtt{//}s_1 := \vee_1 B_1(s_{-1}); \\
&\qquad \ldots\\
&\qquad s_n := \wedge_n B_n(s_{-n}); \;\quad\quad \mathtt{//}s_1 := \vee_n B_n(s_{-n});\\ 
&\big\}\\
&\textbf{while}~ \neg (\tuple{s_1,...,s_n} = \tuple{t_1,...,t_n})
\end{flalign*}
\caption{Robinson-Topkis (RT) algorithm.}
\end{figure}

Let us provide a running example of supermodular finite game.  
\begin{example}\label{ex1}\rm
Consider a two players finite game $\Gamma$ represented in normal form by the following double-entry payoff matrix:

\begin{center}
\renewcommand{\arraystretch}{1.2}
\begin{tabular}{c|C{25pt}|C{25pt}|C{25pt}|C{25pt}|C{25pt}|C{25pt}|}
\multicolumn{1}{c}{}
& \multicolumn{1}{c}{\textbf{1}} &  \multicolumn{1}{c}{\textbf{2}} &  \multicolumn{1}{c}{\textbf{3}} &  \multicolumn{1}{c}{\textbf{4}} &  \multicolumn{1}{c}{\textbf{5}} &  \multicolumn{1}{c}{\textbf{6}} \\
\cline{2-7}
\textbf{6} &-1,\,-3 &-1,\,-1 &2,\,4 &5,\,6 &6,\,5 &6,\,5 \\
\cline{2-7}
\textbf{5} & 0,\,0 & 0,\,2 & 3,\,4 & 6,\,6 &7,\,5 & 6,\,5\\
\cline{2-7}
\textbf{4} &3,\,1  & 3,\,3 & 3,\,5 & 5,\,6 & 5,\,5 & 4,\,4 \\
\cline{2-7}
\textbf{3} & 2,\,2 & 2,\,4 & 2,\,6 & 4,\,5 & 4,\,4 &3,\,2 \\
\cline{2-7}
\textbf{2} & 6,\,4 & 6,\,6 & 6,\,7 & 6,\,4 & 5,\,2 & 4,\,-1 \\
\cline{2-7}
\textbf{1} & 6,\,4 & 5,\,6 & 5,\,6 & 4,\,2 & 3,\,0 & 2,\,-3 \\
\cline{2-7}
\end{tabular}
\end{center}

\noindent
Here, $S_1$ and $S_2$ are both the finite chain of integers $C=\tuple{\{1,2,3,4,5,6\},\leq}$ and $u_1(x,y),u_2(x,y):S_1\times S_2 \ra \bR$ are, respectively, 
the first and second entry in the matrix element determined by row $x$ and column $y$. It turns out that both $u_1$ and $u_2$ 
have increasing differences, so that, since $S_1$ and $S_2$ are finite chains, $\Gamma$ is a finite supermodular game. 
The two best response correspondences $B_1,B_2:C \ra \SL(C)$ are as follows: 
\[
\begin{array}{llllll}
B_1(1)=\{1,2\},& B_1(2)=\{2\},& B_1(3)=\{2\},& B_1(4)=\{2,5\},& B_1(5)=\{5\},& B_1(6)=\{5,6\};\\
B_2(1)=\{2,3\},& B_2(2)=\{3\},& B_2(3)=\{3\},& B_2(4)=\{4\},& B_2(5)=\{4\},& B_2(6)=\{4\}.
\end{array}
\]
Thus, $\Eq(\Gamma) = \{(2,3),(5,4)\}$, since this  is  the set $\Fix(B)$ of fixed points of the best response correspondence
$B=B_1 \times B_2$. We also notice that $u_1(\cdot,s_2),u_2(s_1,\cdot): C \ra \bR$ are neither monotone nor antimonotone. 
The fixed point computations of the least and greatest equilibria through the above RT algorithm proceed as follows: 
\[
\begin{array}{l}
(1,1) \mapsto \big(\!\wedge \!B_1(1,1),1\big) = (1,1) \mapsto \big(1,\wedge B_2(1,1)\big)=(1,2) \mapsto (2,2) \mapsto (2,3) \mapsto (2,3) \mapsto (2,3)\quad \text{(lfp)}\\[5pt]
(6,6) \mapsto (\vee B_1(6,6),6) = (6,6) \mapsto (6,\vee B_2(6,6))=(6,4) \mapsto (5,4)\mapsto (5,4)\mapsto (5,4) \quad\text{(gfp)}\qquad\qed
\end{array}
\]
\end{example}

\subsection{Abstract Interpretation}\label{ai-sec}
Static program analysis relies on correct (a.k.a.\ sound) and computable semantic 
approximations. A program $P$  
is modeled
by some semantics $\Sem\!\grasse{P}$ and a
static analysis of $P$ is designed as an approximate 
semantics 
$\Sem^\sharp \!\grasse{P}$  
which must be correct w.r.t.\ $\Sem\!\grasse{P}$. 
This may be called global correctness of static analysis. Any (finite) program $P$ 
is a suitable composition of a number of
constituents subprograms $c_i$ and this is reflected on its global 
semantics $\Sem\!\grasse{P}$ which is commonly defined 
by some combinations of the semantics $\Sem\!\grasse{c_i}$ of its
components.  
Thus, global correctness of a static 
analysis of $P$ 
is typically derived from local correctness of static analyses for
its components $c_i$. This global vs.\ local picture of  
static analysis correctness is very common, independently of the kind of programs (imperative, functional, reactive, etc.),
of static analysis techniques (model checking, abstract interpretation, logical deductive systems, 
type systems, etc.), of program properties under analysis 
(safety, liveness, numerical properties, pointer aliasing, type safety, etc.).
A basic and rough proof principle in static analysis is that
global correctness is derived from local correctness. 
In particular this applies 
to static program analyses that are designed using some form of abstract 
interpretation. 
Let us consider a simplified but recurrent scenario, where 
$\Sem\!\grasse{P}$
is defined as least (or greatest) fixed point $\lfp(f)$ of a monotone function $f$ on
some domain $C$ of program properties, 
 which is endowed with a partial order that encodes
the relative precision of properties. In abstract interpretation, 
a static analysis is then
specified as an abstract fixed point computation which must be correct for $\lfp(f)$. This is routinely defined 
through an ordered abstract domain $A$ of properties and an abstract 
semantic function $f^\sharp : A\ra A$ that give rise to a fixed point-based
static analysis $\lfp(f^\sharp)$ (whose decidability and/or practical 
scalability is usually ensured by chain conditions on $A$, widenings/narrowings operators, 
interpolations, etc.). Correctness relies on encoding approximation through
a concretization map $\gamma: A \ra C$ and/or an abstraction map $\alpha:
C\ra A$: the approximation of some value $c$ through an abstract property 
$a$ is encoded as $\alpha(c) \leq_A a$ or~---~equivalently, when
$\alpha$/$\gamma$ form a Galois connection~---~$c \leq_C \gamma(a)$. 
Hence, global correctness translates to $\alpha(\lfp(f)) \leq \lfp(f^\sharp)$, 
local correctness means $\alpha \circ f \sqsubseteq f^\sharp \circ \alpha$, and 
the well-known ``fixed point approximation lemma''~\cite{CC77,CC79}
tells us that local implies global correctness. 

In standard abstract interpretation~\cite{CC77,CC79}, 
abstract domains, also called abstractions, 
are specified by Galois connections/insertions
(GCs/GIs for short). 
Concrete and abstract domains, $\tuple{C,\leq_C}$ and
$\tuple{A,\leq_A}$, are assumed to be complete lattices 
which are related by abstraction and concretization maps
$\alpha:C\ra A$ and $\gamma:A \ra C$ that give
rise to a GC $(\alpha,C,A,\gamma)$, that is,
for all $a\in A$ and $c\in C$,
$\alpha(c) \leq_A a \Lra c \leq_C \gamma(a)$. 
A GC is a GI when $\alpha\circ\gamma=\id$. 
A GC is (finitely) disjunctive when $\gamma$ preserves all (finite) lubs. 
We use $\Abs(C)$ to denote all the possible abstractions of $C$, where
$A\in \Abs(C)$ means that $A$ is an abstract domain of $C$ specified by some GC/GI.
%
Let us recall some well known properties of a GC
$(\alpha,C,A,\gamma)$: (1)~$\alpha$ is additive; 
(2)~$\gamma$ is co-additive; 
(3)~$\gamma\circ \alpha: C\ra C$ is a closure operator;
(4)~if $\rho:C\ra C$ is a closure operator
then $(\rho,C,\rho(C),\id)$ is a GI; (5)~$(\alpha,C,A,\gamma)$ is a GC iff $\gamma(A)$ is the image of 
a closure operator on $C$; (6)~a GC $(\alpha,C,A,\gamma)$ is (finitely) disjunctive iff $\gamma(A)$ is (finitely)
meet- and join-closed.

\begin{example}\label{ex-chains}\rm 
Let us consider a concrete domain $\tuple{C,\leq}$ which is a finite chain. Then, it turns out that 
$(\alpha,C,A,\gamma)$ is a GC iff $\gamma(A)$ is the image of a closure operator on $C$ iff $\gamma(A)$ 
is a any subset of $C$ which contains $\top_C$. As an example, for 
the game $\Gamma$ in Example~\ref{ex1}, where $S_i$ is the chain of integers $[1,6]$, we have that 
$A_1 =\{3,5,6\}$ and 
$A_2=\{2,6\}$ are two abstractions of $C$. 
\qed
\end{example}

\begin{example}\label{ex-ceil}\rm 
Let us consider the ceil function on real numbers $\ceil{\cdot}:\bR \ra \bR$, that is, $\ceil{x}$ is 
the smallest integer not less than $x$. Let us observe that $\ceil{\cdot}$ is a closure operator on $\tuple{\bR,\leq}$ because:
(1)~$x\leq y\:\Ra\: \ceil{x} \leq \ceil{y}$; (2)~$x\leq \ceil{x}$; (3)~$\ceil{\ceil{x}}=\ceil{x}$. Therefore, the ceil
function allows us to view integer numbers $\bZ=\ceil{\bR}$ as an abstraction of real numbers. The ceil function
can be generalized to any finite fractional part of real numbers:  given any integer number $N\geq 0$, $\cl_N:\bR\ra \bR$ is
defined as follows: $\cl_N(x) = \frac{\ceil{10^N x}}{10^N}$. For $N=0$, $\cl_N(x)=\ceil{x}$, while for $N>0$, $\cl_N(x)$ is 
the smallest rational number with at most $N$ fractional digits not less than $x$.  For example, 
if $x\in \bR$ and $1<x\leq 1.01$ then $\cl_2(x) = 1.01$. 
Clearly, it turns out that $\cl_N$ is
a closure operator which permits to cast rational numbers with at most $N$ fractional digits as an abstraction of real numbers. 
\qed
\end{example}

Let $f:C\ra C$ be some 
concrete monotone function---to keep notation simple, 
we consider 1-ary functions---and let
 $\ok{f^\sharp:A \ra A}$ be a corresponding monotone abstract function
 defined on some abstraction $A$ specified by a GC $(\alpha,C,A,\gamma)$. Then,
$\ok{f^\sharp}$ is a correct (or sound) approximation of $f$ on $A$
when $\ok{f \circ \gamma \sqsubseteq \gamma \circ f^\sharp}$ holds. 
If $\ok{f^\sharp}$ is a correct approximation of $f$ then we also have fixed point correctness, that is, 
$\lfp(f) \leq_C  \gamma(\lfp(\ok{f^\sharp}))$ and  $\gfp(f) \leq_C  \gamma(\gfp(\ok{f^\sharp}))$.
The abstract function
$\ok{f^A \ud \alpha \circ f \circ \gamma: A\rightarrow A}$ 
is called the best
correct approximation of $f$ on $A$, because any abstract
function
$\ok{f^\sharp}$ is correct iff $\ok{f^A \sqsubseteq f^\sharp}$.
Hence,
$\ok{f^A}$ plays the role of the 
best possible approximation of $f$ on the abstraction~$A$.

\section{Abstractions on Product Domains}
Let us show how abstractions of different concrete domains $C_i$ can be composed in order to define an abstract domain
of the product domain $\times_i C_i$, and, on the other hand, an abstraction of a product $\times_i C_i$ can be decomposed into
abstract domains of the component domains $C_i$. In the following, we consider a  finite family of complete
lattices $\tuple{C_i,\leq_i}_{i=1}^n$, while product domains are considered with the componentwise ordering relation. 

\paragraph{Product Composition of Abstractions.}
This method has been introduced by Cousot and Cousot in \cite[Section~4.4]{cc94}.
Given a family of GCs $(\alpha_i,C_i,A_i,\gamma_i)_{i=1}^n$, one can easily define a componentwise abstraction 
$(\alpha,\times_{i=1}^n C_i,\times_{i=1}^n A_i,\gamma)$ 
of the product complete lattice
$\times_{i=1}^n C_i$, where 
$\times_{i=1}^n C_i$ and $\times_{i=1}^n A_i$  are both complete lattices w.r.t.\ the componentwise partial order and
for any $c \in \times_{i=1}^n C_i$ and $a\in \times_{i=1}^n A_i$,
$$\alpha(c) \ud (\alpha_i(c_i))_{i=1}^n, \qquad\qquad\gamma (a) \ud (\gamma_i(a_i))_{i=1}^n.$$
For any $i$, we also use the function $\gamma_{-i}:A_{-i} \ra C_{-i}$ to denote
$\gamma_{-i}(a_{-i}) = \gamma(a)_{-i}=(\gamma_j(a_j))_{j\neq i}$.

\begin{lemma}\label{prod2}
$(\alpha,\times_{i=1}^n C_i,\times_{i=1}^n A_i,\gamma)$ is a GC. Moreover, if each $(\alpha_i,C_i,A_i,\gamma_i)$ is a (finitely) disjunctive GC
then $(\alpha,\times_{i=1}^n C_i,\times_{i=1}^n A_i,\gamma)$ is a (finitely) disjunctive GC.
\end{lemma}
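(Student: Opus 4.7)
The plan is to unwind both claims into componentwise statements and then invoke the corresponding properties of the individual GCs $(\alpha_i, C_i, A_i, \gamma_i)$. The key observation is that the partial order, the lubs, and the glbs on $\times_{i=1}^n C_i$ and $\times_{i=1}^n A_i$ are all defined componentwise, so the maps $\alpha$ and $\gamma$, being defined componentwise as well, commute with the componentwise projection.

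First, to verify that $(\alpha, \times_i C_i, \times_i A_i, \gamma)$ is a GC, I would fix arbitrary $c \in \times_i C_i$ and $a \in \times_i A_i$ and argue by the chain of equivalences
\[
\alpha(c) \leq a \;\Lra\; \forall i.\, \alpha_i(c_i) \leq_i a_i \;\Lra\; \forall i.\, c_i \leq_i \gamma_i(a_i) \;\Lra\; c \leq \gamma(a),
\]
where the first and last equivalences use that the product order is componentwise, and the middle one applies the GC property of each $(\alpha_i, C_i, A_i, \gamma_i)$. Monotonicity of $\alpha$ and $\gamma$ follows from componentwise monotonicity of the $\alpha_i$'s and $\gamma_i$'s (or, alternatively, is a consequence of the just-established adjunction).

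Second, for the disjunctivity claim, recall that a GC is (finitely) disjunctive iff its concretization preserves all (finite) lubs. Given any (finite) family $\{a^{(k)}\}_k$ in $\times_i A_i$, I would compute
\[
\gamma\big(\textstyle\bigvee_k a^{(k)}\big) = \big(\gamma_i\big(\textstyle\bigvee_k a^{(k)}_i\big)\big)_{i=1}^n = \big(\textstyle\bigvee_k \gamma_i(a^{(k)}_i)\big)_{i=1}^n = \textstyle\bigvee_k \gamma(a^{(k)}),
\]
using in order that the lub in $\times_i A_i$ is componentwise, that each $\gamma_i$ preserves (finite) lubs by hypothesis, and that the lub in $\times_i C_i$ is also componentwise. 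This establishes (finite) disjunctivity of the product GC.

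I expect no genuine obstacle here: both parts are essentially bookkeeping once one recognizes that product domains inherit both their order and their lattice operations componentwise. The only notational care required is to keep straight the indices $i$ (ranging over the players/factors) versus $k$ (ranging over the family being joined) in the disjunctivity argument.
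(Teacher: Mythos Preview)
Your proposal is correct and is exactly the natural componentwise verification one would expect; the paper in fact states this lemma without proof (citing it as a known construction from Cousot and Cousot), so there is no alternative argument to compare against.
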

\noindent
In static program analysis, 
$(\alpha,\times_{i=1}^n C_i,\times_{i=1}^n A_i,\gamma)$ is called a nonrelational abstraction since, intuitively, 
the product abstraction $\times_{i=1}^n A_i$ 
 does not take into account any relationship between the different concrete domains $C_i$.

\paragraph{Decomposition of Product Abstractions.}
Let us show that any GC $(\alpha,\times_{i=1}^n C_i,A,\gamma)$ for the concrete product domain $\times_{i=1}^n C_i$ induces 
corresponding abstractions $(\alpha_i,C_i,A_i,\gamma_i)$ of $C_i$ as follows:
\begin{itemize}
\item[--] $A_i \ud \{c_i\in C_i~|~\exists a\in A. \gamma(a)_i =c_i\}\subseteq C_i$,
endowed with the partial order $\leq_i$ of $C_i$;
\item[--] for any $c_i\in C_i$, $\alpha_i(c_i) \ud \gamma(\alpha(c_i,\bot_{-i}))_i$;
\item[--] for any $x_i\in A_i$, $\gamma_i (x_i) \ud x_i$.
\end{itemize}

\begin{lemma}\label{prod1}
$(\alpha_i,C_i,A_i,\gamma_i)$  is a GC. Moreover, this GC is (finitely) disjunctive when $(\alpha,\times_{i=1}^n C_i,A,\gamma)$ is (finitely) disjunctive.  
\end{lemma}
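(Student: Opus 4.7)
The plan is to leverage the characterization recalled in the excerpt that a GC $(\alpha, C, A, \gamma)$ is equivalent to $\gamma(A)$ being the image of a closure operator on $C$ (properties (4)--(5)), together with the fact that finite disjunctivity is equivalent to $\gamma(A)$ being additionally (finitely) meet- and join-closed (property (6)). Once the closure-operator presentation is in place, both claims follow almost mechanically.

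The first step is to set $\rho \ud \gamma \circ \alpha$, which by property (3) is a closure operator on $\times_{j=1}^n C_j$ with $\Fix(\rho) = \gamma(A)$, and to introduce the auxiliary map $\rho_i: C_i \ra C_i$ defined by $\rho_i(c_i) \ud \rho(c_i, \bot_{-i})_i$. This map coincides with the $\alpha_i$ of the statement. I would then verify that $\rho_i$ is a closure operator on $C_i$ and that its image is exactly $A_i$. Monotonicity and extensivity of $\rho_i$ are routine consequences of the corresponding properties of $\rho$ together with monotonicity of the embedding $c_i \mapsto (c_i, \bot_{-i})$. For the identification $A_i = \rho_i(C_i)$, the inclusion ``$\supseteq$'' is immediate from the definition of $A_i$, and the reverse inclusion follows by observing that, for any $a \in A$ with $\gamma(a)_i = x_i$, one has $(x_i, \bot_{-i}) \leq \gamma(a)$ componentwise; applying $\rho$ and using $\rho(\gamma(a)) = \gamma(a)$ combined with extensivity of $\rho_i$ forces $\rho_i(x_i) = x_i$. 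Once this is in place, property (4) yields that $(\rho_i, C_i, A_i, \id) = (\alpha_i, C_i, A_i, \gamma_i)$ is a GI, hence a GC.

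I expect the main technical obstacle to be idempotency of $\rho_i$, since this is the one point where the projection interacts nontrivially with the product closure. The key observation is that $(\rho_i(c_i), \bot_{-i}) \leq \rho(c_i, \bot_{-i})$ componentwise (with equality in the $i$-th slot); applying $\rho$, monotonicity together with idempotency of the product closure $\rho$ gives $\rho_i(\rho_i(c_i)) \leq \rho_i(c_i)$, and the reverse inequality is just extensivity of $\rho_i$.

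For the disjunctive part, I would again route through property (6): the hypothesis says that $\gamma(A) = \Fix(\rho)$ is (finitely) meet- and join-closed inside $\times_j C_j$, and I want the same property for $A_i$ inside $C_i$. Given a (finite) subset $S \subseteq A_i$, I choose witnesses $\{a_x\}_{x \in S} \subseteq A$ with $\gamma(a_x)_i = x$; then $\bigwedge_{x\in S} \gamma(a_x) \in \gamma(A)$ and, since meets in the product are taken componentwise, its $i$-th projection equals $\bigwedge S$, so $\bigwedge S \in A_i$. The symmetric argument with joins gives $\bigvee S \in A_i$. Applying property (6) in the reverse direction then gives that the induced GC $(\alpha_i,C_i,A_i,\gamma_i)$ is (finitely) disjunctive.
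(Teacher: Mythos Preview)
Your proposal is correct and follows essentially the same route as the paper's proof: both arguments hinge on the closure-operator presentation of GCs, use the key inequality $(x_i,\bot_{-i}) \leq \gamma(a)$ (with $\gamma(a)_i = x_i$) to pin down $A_i$ as the image/fixed points of $\rho_i$, and handle the disjunctive clause by lifting witnesses $a_x \in A$ with $\gamma(a_x)_i = x$ and using componentwise joins in the product. The only cosmetic difference is packaging: the paper first shows $A_i$ is a Moore family and then checks the adjunction formula $\alpha_i(c_i) = \bigwedge\{x_i \in A_i \mid c_i \leq_i x_i\}$ directly, whereas you verify the closure axioms for $\rho_i$ (monotone, extensive, idempotent) and then invoke property~(4); your idempotency step via $(\rho_i(c_i),\bot_{-i}) \leq \rho(c_i,\bot_{-i})$ is exactly the content of the paper's adjoint verification.
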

\begin{proof}
Let us show that $A_i\subseteq C_i$ is meet-closed. If $X\subseteq A_i$ then for any $x\in X$ there exists some $a_x\in A$ 
such that $\gamma(a_x)_i = x$. Then, let $a\ud \wedge_A \{a_x \in A~|~ x\in X\}\in A$. Since $\gamma$ preserves arbitrary meets,
we have that $\gamma(a) = \wedge_C \{\gamma(a_x)\in C~|~x\in X\}$, so that $\gamma(a)_i = \wedge_{C_i} X$, that is, $\wedge_{C_i} X \in A_i$. 
Hence, since $A_i$ is a Moore-family of $C_i$, we have that $\gamma_i = \mathrm{id}:A_i \ra C_i$ preserves arbitrary meets and therefore is a concretization
function. Let us  check that $\alpha_i$ is the left adjoint of $\gamma_i$, i.e., for any $c_i\in C_i$, 
$\alpha_i(c_i) = \gamma(\alpha(c_i,\bot_{-i}))_i= \wedge_{C_i} \{x_i\in A_i~|~ c_i \leq_i x_i\}$. On the one hand, 
since $(c_i,\bot_{-i}) \leq \gamma(\alpha(c_i,\bot_{-i}))$, we have that
$c_i \leq \gamma(\alpha(c_i,\bot_{-i}))_i$, so that since $\gamma(\alpha(c_i,\bot_{-i}))_i \in A_i$, we conclude that
$\wedge_{C_i} \{x_i\in A_i~|~ c_i \leq_i x_i\} \leq_i \gamma(\alpha(c_i,\bot_{-i}))_i$. On the other hand, 
if $x_i \in A_i$ and $c_i \leq_i x_i$ then $x_i = \gamma(a)_i$ for some $a\in A$, so that we have that 
$(c_i,\bot_{-i}) \leq \gamma(a)$, therefore $\gamma(\alpha(c_i,\bot_{-i})) \leq \gamma(\alpha(\gamma(a)))=\gamma(a)$, and, in turn, 
$\gamma(\alpha(c_i,\bot_{-i}))_i \leq_i \gamma(a)_i = x_i$, which implies that $\gamma(\alpha(c_i,\bot_{-i}))_i \leq_i 
\wedge_{C_i} \{x_i\in A_i~|~ c_i \leq_i x_i\}$. Finally, let us observe that if $\gamma$ is (finitely) additive and
$X\subseteq A_i$ so that for any $x\in X$ there exists some $a_x\in A$ 
such that $\gamma(a_x)_i = x$ then $\gamma(\vee_A \{a_x\in A~|~ x\in X\}) = \vee \{\gamma(a_x) \in \times_{i=1}^n C_i ~|~ x\in X\}$,
so that $\gamma(\vee_A \{a_x\in A~|~ x\in X\})_i = \vee_i \gamma(a_x)_i =\vee_i X$, namely, $\vee_i X\in A_i$, meaning that 
$\gamma_i = \mathrm{id}$ is (finitely) additive. 
\end{proof}

A GC $(\alpha,\times_{i=1}^n C_i,A,\gamma)$ is called \emph{nonrelational} when it is isomorphic to 
the product composition, according to Lemma~\ref{prod2}, of its components  
obtained by Lemma~\ref{prod1}. Of course, the product composition by Lemma~\ref{prod2} of abstract domains is trivially nonrelational. 
Otherwise, $(\alpha,\times_{i=1}^n C_i,A,\gamma)$ is called \emph{relational}. 
It is worth remarking that if $A$ is relational then 
$A$ cannot be 
obtained as a product of abstractions of $C$. As a consequence, 
the relationality of an abstraction $A$
prevents  the definition of a standard noncooperative game over the strategy
space $A$ since $A$ cannot be obtained as a product domain.

\begin{example}\label{ex-comp}\rm
Let us consider the game $\Gamma$ in Example~\ref{ex1} whose finite 
strategy space is $C\times C$, where $C=\{1,2,3,4,5,6\}$ is a chain.  
Consider the subset $A\subseteq C\times C$ as depicted by the following diagram where the ordering is induced 
from $C\times C$:
\begin{center}
    \begin{tikzpicture}[scale=0.5]
      \draw (0,0) node[name=22] {{$(2,2)$}};
     \draw (0,2) node[name=34] {{$(3,4)$}};
       \draw (-2,4) node[name=44] {{$(4,4)$}};
       \draw (2,4) node[name=35] {{$(3,5)$}};  
      \draw (0,6) node[name=45] {{$(4,5)$}};
      \draw (0,8) node[name=66] {{$(6,6)$}};

      \draw[semithick] (22) -- (34);
      \draw[semithick] (34) -- (44);
      \draw[semithick] (34) -- (35);
      \draw[semithick] (44) -- (45);
      \draw[semithick] (35) -- (45);
      \draw[semithick] (45) -- (66);

\end{tikzpicture}
\end{center}
Since $A$ is meet- and join-closed and includes the greatest element $(6,6)$ of $C\times C$, we have that $A$ is a disjunctive abstraction
of $C\times C$, where $\alpha:C\times C\ra A$ is the closure operator
induced by $A$ and $\gamma:A\ra C\times C$ is the identity. Observe that $A$ is relational since 
its decomposition by Lemma~\ref{prod1} provides $A_1=\{2,3,4,6\}$ and $A_2=\{2,4,5,6\}$, and the product composition $A_1\times A_2$
by Lemma~\ref{prod2} yields a more expressive abstraction than $A$, for example $(2,4) \in (A_1\times A_2)\smallsetminus A$. 

On the other hand, for the abstractions $A_1 =\{3,5,6\}$ and 
$A_2=\{2,6\}$ of Example~\ref{ex-chains}, by Lemma~\ref{prod2}, the product domain $A_1\times A_2$ is a nonrelational
abstraction of $C\times C$.  
\qed
\end{example}

\section{Approximation of Multivalued Functions}
Let us show how abstract interpretation can be applied to approximate least and greatest fixed points of multivalued functions.  

\subsection{Constructive Results for Fixed Points of Multivalued Functions}
Let $C$ be a complete lattice, $f:C\ra \wp(C)$ be a multivalued function and 
$\fwedge, \fvee: C\ra C$ be the functions defined as: $\fwedge (c)\ud  \wedge f(c)$ and  $\fvee (c)
\ud \vee f(c)$. 
The following constructive result ensuring the existence of least fixed points for a multivalued function 
is given in \cite[Propositions~3.10 and~3.24]{straccia2008}. We provide here a shorter and more direct constructive proof
than in \cite{straccia2008} which is based on the constructive version
of Tarski's fixed point theorem given by Cousot and Cousot~\cite{cc79pjm}.

\begin{lemma}\label{lemmalfp}
If $f:C\ra \wpwedge(C)$ is $S$-monotone then $f$ has the least fixed point $\lfp(f)$. 
Moreover, $\lfp(f) = \bigvee_{\alpha \in \bO} \fwedge^\alpha (\bot)$. 
\end{lemma}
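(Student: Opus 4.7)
The plan is to reduce the multivalued statement to the classical constructive Tarski theorem for the single-valued selector $\fwedge$. Everything hinges on two easy observations: $\fwedge$ is monotone as a map $C\to C$, and its fixed points coincide with those of $f$ that one cares about.

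First I would verify that $\fwedge:C\to C$ is monotone. If $x\leq y$ in $C$, then $S$-monotonicity of $f$ gives $f(x)\preceq_S f(y)$. Since by hypothesis $f(x),f(y)\in \wpwedge(C)$, the remark recorded earlier in the background section gives the equivalence $f(x)\preceq_S f(y)\Lra \wedge f(x)\leq \wedge f(y)$, i.e.\ $\fwedge(x)\leq\fwedge(y)$.

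Having a monotone map $\fwedge$ on the complete lattice $C$, I would invoke the constructive version of the Knaster--Tarski fixed point theorem due to Cousot and Cousot \cite{cc79pjm}: $\fwedge$ admits a least fixed point given exactly by the transfinite iteration $\lfp(\fwedge)=\bigvee_{\alpha\in\bO}\fwedge^\alpha(\bot)$. Denote this element by $c^*$.

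It remains to show that $c^*$ is in fact the least fixed point of the multivalued $f$. For membership in $\Fix(f)$, note that $c^*=\fwedge(c^*)=\wedge f(c^*)$, and since $f(c^*)\in\wpwedge(C)$ we have $\wedge f(c^*)\in f(c^*)$; hence $c^*\in f(c^*)$. For minimality, suppose $y\in\Fix(f)$, i.e.\ $y\in f(y)$. Then $\fwedge(y)=\wedge f(y)\leq y$, so $y$ is a prefixed point of the monotone map $\fwedge$, and therefore $c^*=\lfp(\fwedge)\leq y$ by Tarski. Combining, $c^*=\lfp(f)$ and the iterative characterization follows.

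The ``hard part'' is really just the first step, namely recognizing that $S$-monotonicity combined with $f(c)\in\wpwedge(C)$ collapses the set-level comparison to a pointwise comparison on the meets; after that the argument is a verbatim application of the Cousot--Cousot constructive theorem together with the trivial observation that a prefixed point of $\fwedge$ is below any fixed point of $f$.
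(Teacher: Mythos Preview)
Your proposal is correct and follows essentially the same route as the paper's proof: reduce to the single-valued selector $\fwedge$, observe it is monotone, apply the constructive Knaster--Tarski theorem of \cite{cc79pjm}, and then check that $\lfp(\fwedge)$ is the least element of $\Fix(f)$. The only cosmetic difference is in the minimality step: the paper verifies $\fwedge^\alpha(\bot)\leq z$ for every $z\in\Fix(f)$ by an explicit transfinite induction on $\alpha$, whereas you observe that such a $z$ is a prefixed point of $\fwedge$ and invoke the prefixed-point characterization of $\lfp(\fwedge)$ directly; both arguments are standard and amount to the same thing.
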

\begin{proof}
By hypothesis, $f(x)\in \wpwedge(C)$, so that $\fwedge(x) \in f(x)$. 
If $x,y\in C$ and $x\leq y$ then, by hypothesis, $f(x) \preceq_S f(y)$, therefore, since $\fwedge(y) \in f(y)$, there exists some
$z\in f(x)$ such that $z\leq \fwedge(y)$, and, in turn, $\fwedge(x) \leq z \leq \fwedge (y)$. Hence, since $\fwedge$ is a monotone function
on a complete lattice, 
by Tarski's theorem, its least fixed point $\lfp(\fwedge)\in C$ exists. Furthermore, by the constructive version
of Tarski's theorem \cite[Theorem~5.1]{cc79pjm}, $\lfp(\fwedge) = \bigvee_{\alpha\in \bO} \fwedge^\alpha (\bot)$. 
We have that $\lfp(\fwedge) = \fwedge(\lfp(\fwedge)) \in f(\lfp(\fwedge))$, hence $\lfp(\fwedge)\in \Fix(f)$. Consider any $z\in \Fix(f)$. 
We prove by transfinite induction that for any $\alpha \in \bO$,  $\fwedge^\alpha (\bot)\leq z$. 
If $\alpha =0$ then $\fwedge^0 (\bot)=\bot \leq z$. If $\alpha = \beta+1$ then 
$\fwedge^\alpha (\bot)= \fwedge (\fwedge^\beta (\bot))$, and, 
since, by inductive hypothesis, $\fwedge^\beta (\bot)\leq z$, then, by monotonicity of $\fwedge$, 
$\fwedge (\fwedge^\beta (\bot)) \leq \fwedge (z) = \wedge f(z) \leq z$. 
If $\alpha = \vee \{\beta\in \bO~|~\beta<\alpha\}$ is a limit ordinal then $\fwedge^\alpha (\bot)= \bigvee_{\beta <\alpha} \fwedge^\beta (\bot)$; since,
by inductive hypothesis, $\fwedge^\beta (\bot)\leq z$ for any $\beta < \alpha$, we obtain that  $\fwedge^\alpha (\bot)\leq z$.
This therefore shows that $f$ has the least fixed point $\lfp(f)=\lfp(\fwedge)$. 
\end{proof}

By duality, as consequences of the above result, we obtain the following characterizations, 
where point~(3) coincides with Zhou's theorem (see \cite[Theorem~1]{zhou94} and
\cite[Proposition 3.15]{straccia2008}), which is used for showing that pure Nash equilibria of a supermodular
game form a complete lattice. 

\begin{corollary}\label{coro4}\ \\ \
{\rm (1)} If $f:C\ra \wpvee(C)$ is $H$-monotone then $f$ has the greatest fixed point $\gfp(f)
= \bigwedge_{\alpha \in \bO} \fvee^\alpha (\top)$.\\ 
{\rm (2)} If $f:C\ra \wp^\diamond(C)$ is $\EM$-monotone then $f$ has the least and greatest fixed points, where
$\lfp(f)= \bigvee_{\alpha \in \bO} \fwedge^\alpha (\bot)$ and 
$\gfp(f)
= \bigwedge_{\alpha \in \bO} \fvee^\alpha (\top)$.\\
{\rm (3)} If $f:C\ra \SL(C)$ is $\EM$-monotone then $\tuple{\Fix(f),\leq}$ is a complete lattice.  \\
{\rm (4)} If $f,g:C\ra \SL(C)$ are $\EM$-monotone and, for any $c\in C$, $f(c) \preceq_{\EM} g(c)$ 
then $\Fix(f) \preceq_{\EM} \Fix(g)$.
\end{corollary}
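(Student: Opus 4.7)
The plan is to handle the four parts in order of increasing difficulty, relying on Lemma~\ref{lemmalfp} and its order-dual. Part~(1) is obtained from Lemma~\ref{lemmalfp} by taking the order-dual complete lattice $C^{\mathrm{op}}$: this turns $\wpvee(C)$ into $\wpwedge(C^{\mathrm{op}})$, $\fvee$ into $\fwedge$ (on $C^{\mathrm{op}}$), $H$-monotonicity into $S$-monotonicity, and $\top$ into $\bot$, so the existence of $\gfp(f)=\bigwedge_{\alpha\in\bO}\fvee^\alpha(\top)$ is immediate. Part~(2) then follows at once, since $\wp^\diamond(C)=\wpwedge(C)\cap\wpvee(C)$ and $\EM$-monotonicity is by definition the conjunction of $S$- and $H$-monotonicity, so Lemma~\ref{lemmalfp} and~(1) apply in parallel. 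Part~(3) is exactly Zhou's theorem \cite[Theorem~1]{zhou94}, which I would simply invoke (noting that $\SL(C)\subseteq \wp^\diamond(C)$, so~(2) already exhibits the least and greatest elements of $\Fix(f)$).

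For the substantive part~(4), the key preliminary observation is that $f(c),g(c)\in \SL(C)\subseteq \wp^\diamond(C)$, so by the equivalence recorded in the background, the hypothesis $f(c)\pem g(c)$ is equivalent to the pair of pointwise scalar inequalities $\fwedge(c)\leq \gwedge(c)$ and $\fvee(c)\leq \gvee(c)$; in functional form, $\fwedge\sqsubseteq\gwedge$ and $\fvee\sqsubseteq\gvee$. Together with part~(2), this gives four monotone selfmaps $\fwedge,\fvee,\gwedge,\gvee:C\to C$, reducing the problem to two symmetric Knaster--Tarski arguments on intervals.

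For the Smyth half of $\Fix(f)\pem \Fix(g)$, I would pick $y\in \Fix(g)$ and produce $x\in \Fix(f)$ with $x\leq y$ as follows. Since $y\in g(y)$ and $g(y)\in \wpwedge(C)$, we have $\gwedge(y)\leq y$, so $\fwedge(y)\leq \gwedge(y)\leq y$. Monotonicity of $\fwedge$ then implies $\fwedge([\bot,y])\subseteq [\bot,y]$, and Knaster--Tarski applied to this complete sublattice yields a fixed point $x\in [\bot,y]$ of $\fwedge$; since $f(x)\in \wpwedge(C)$, one has $x=\fwedge(x)=\wedge f(x)\in f(x)$, hence $x\in \Fix(f)$. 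The Hoare half is the exact dual: given $x\in \Fix(f)$, from $x\in f(x)\in \wpvee(C)$ we get $x\leq \fvee(x)\leq \gvee(x)$, so $\gvee$ stabilises the interval $[x,\top]$ and Knaster--Tarski produces $y\in [x,\top]$ with $y=\gvee(y)=\vee g(y)\in g(y)$, i.e.\ $y\in \Fix(g)$ with $x\leq y$.

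The only conceptual step is the translation of the $\pem$ hypothesis into the two scalar pointwise inequalities $\fwedge\sqsubseteq \gwedge$ and $\fvee\sqsubseteq\gvee$, which depends essentially on $f(c),g(c)\in \wp^\diamond(C)$; without this property, $\preceq_{\EM}$ would not reduce to the behaviour of the meet/join selectors and the interval arguments would collapse. Once this reduction is in hand, both halves are routine Knaster--Tarski on suitable intervals, and I do not foresee any further obstacle beyond checking the self-map condition for $\fwedge$ on $[\bot,y]$ and for $\gvee$ on $[x,\top]$, each of which is immediate from monotonicity.
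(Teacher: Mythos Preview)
Your proof is correct. For parts~(1)--(3) you proceed essentially as the paper does: duality for~(1), combination of Lemma~\ref{lemmalfp} and~(1) for~(2), and invoking Zhou's theorem~\cite{zhou94} for~(3).

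For part~(4) your argument takes a genuinely different route. The paper first uses part~(3) to observe that $\Fix(f),\Fix(g)\in\wp^\diamond(C)$, whence by the background equivalence the relation $\Fix(f)\preceq_{\EM}\Fix(g)$ reduces to the two scalar inequalities $\lfp(f)\leq\lfp(g)$ and $\gfp(f)\leq\gfp(g)$; these then follow from $\fwedge\sqsubseteq\gwedge$, $\fvee\sqsubseteq\gvee$ and the standard monotonicity of $\lfp$/$\gfp$ in the function argument (via Lemma~\ref{lemmalfp}). You instead verify the Smyth and Hoare halves of the Egli--Milner relation \emph{directly}, producing for each $y\in\Fix(g)$ a witness $x\in\Fix(f)$ with $x\leq y$ by running Knaster--Tarski for $\fwedge$ on the interval $[\bot,y]$, and dually running $\gvee$ on $[x,\top]$. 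Your approach is slightly longer but more self-contained: it does not depend on part~(3), and would go through under the weaker hypothesis $f,g:C\to\wp^\diamond(C)$. The paper's approach buys brevity by exploiting the lattice structure of $\Fix(f)$ and $\Fix(g)$ already secured in~(3), collapsing the whole check to two comparisons of extremal fixed points.
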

\begin{proof}
Let us prove point~(4). By Point~(3), both $\Fix(f)$ and $\Fix(g)$ are complete lattices for $\leq$. Thus, 
$\Fix(f) \preceq_{\EM} \Fix(g)$ holds iff $\wedge \Fix(f) =\lfp(f) \leq \lfp(g)=\wedge \Fix(g)$ and
$\vee \Fix(f) =\gfp(f) \leq \gfp(g)=\vee \Fix(g)$. Moreover, since, for any $c\in C$, $f(c) \preceq_{\EM} g(c)$, 
we also have that $\fwedge(c) = \wedge f(c) \leq \wedge f(c) = \gwedge(c)$, thus, as a consequence, 
$\lfp(\fwedge) \leq \lfp(\gwedge)$. 
The proof of Lemma~\ref{lemmalfp} shows
that $\lfp(f)= \lfp(\fwedge)$ and $\lfp(g)= \lfp(\gwedge)$, so that we obtain $\lfp(f)\leq \lfp(g)$. The proof for 
$\gfp(f) \leq \gfp(g)$ is dual. 
\end{proof}

\subsection{Concretization-based Approximations}\label{concr}
As discussed in \cite{cc92}, a minimal requirement for defining an abstract domain consists in specifying the meaning of its
abstract values through a concretization map. 
Let $\tuple{A,\leq_A}$ be an abstraction of a concrete domain $C$ specified by a monotone concretization map $\gamma:A \ra C$.
Let us observe that the powerset lifting 
$\gamma^s:\wp(A) \ra \wp(C)$ is $S$-monotone,
meaning that if $Y_1 \preceq_S Y_2$ then $\gamma^s(Y_1) \preceq_S \gamma^s(Y_2)$: if $\gamma(y_2)\in
\gamma^s(Y_2)$ then there exists $y_1\in Y_1$ such that $y_1 \leq_A y_2$, so that $\gamma(y_1)\in
\gamma^s(Y_1)$ and $\gamma(y_1)\leq_C \gamma(y_2)$, i.e., $\gamma^s(Y_1) \preceq_S \gamma^s(Y_2)$. Analogously, $\gamma^s$ is $H$- and $\EM$-monotone. 
Consider a concrete $S$-monotone multivalued function $f:C\ra \wpwedge(C)$, whose least fixed point exists by Lemma~\ref{lemmalfp}.

\begin{definition}[\textbf{Correct Approximation of Multivalued Functions}]\rm 
An abstract multivalued function $f^\sharp:A\ra \wp(A)$ over $A$ is a \emph{$S$-correct approximation} of $f$ when:
\begin{itemize}
\item[(1)] $f^\sharp: A \ra \wpwedge(A)$ and $f^\sharp$ is $S$-monotone \qquad (fixed point condition)  
\item[(2)] for any $a\in A$, $f(\gamma(a)) \preceq_S \gamma^s (f^\sharp (a))$ \qquad\hspace{3pt} (soundness condition)
\end{itemize}
$H$- and $\EM$-correct approximations are defined by replacing in this definition 
$S$- with, respectively, $H$- and $\EM$-, and $\wpwedge$ with, respectively, $\wpvee$ and $\wp^\diamond$. \qed
\end{definition}

Let us point out that the soundness condition~(2) is the standard correctness requirement used in abstract interpretation,
as recalled in Section~\ref{ai-sec}. 
The difference here is
that $C_2$ and $A_2$ are mere preorders rather than partial orders. However, this is enough for guaranteeing a correct approximation
of least fixed points. 

\begin{theorem}[\textbf{Correct Least Fixed Point Approximation}]\label{lfp-approx}
If $f^\sharp$ is a $S$-correct approximation of $f$ then $\lfp(f) \leq_C \gamma(\lfp(f^\sharp))$.
\end{theorem}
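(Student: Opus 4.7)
My plan is to reduce the multivalued fixed point approximation statement to the classical single-valued one by exploiting the meet-closure hypotheses on the images of both $f$ and $f^\sharp$. Concretely, I would first extract from the Smyth soundness condition the pointwise inequality $\fwedge \circ \gamma \sqsubseteq \gamma \circ f^\sharp_\wedge$, where $f^\sharp_\wedge(a) \ud \wedge f^\sharp(a)$. Fix $a\in A$: since $f^\sharp(a)\in \wpwedge(A)$, the element $f^\sharp_\wedge(a)$ belongs to $f^\sharp(a)$, so $\gamma(f^\sharp_\wedge(a)) \in \gamma^s(f^\sharp(a))$. By $f(\gamma(a)) \preceq_S \gamma^s(f^\sharp(a))$ there exists $x\in f(\gamma(a))$ with $x \leq_C \gamma(f^\sharp_\wedge(a))$, and because $f(\gamma(a))\in\wpwedge(C)$ we have $\fwedge(\gamma(a)) = \wedge f(\gamma(a)) \leq x \leq \gamma(f^\sharp_\wedge(a))$, as desired.

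Next, by Lemma~\ref{lemmalfp} applied to both $f$ and $f^\sharp$ (which are $S$-monotone with meet-closed images), we have $\lfp(f) = \lfp(\fwedge) = \bigvee_{\alpha\in\bO} \fwedge^\alpha(\bot_C)$ and $\lfp(f^\sharp) = \lfp(f^\sharp_\wedge) = \bigvee_{\alpha\in\bO} (f^\sharp_\wedge)^\alpha(\bot_A)$, and the proof of that lemma already ensures that $\fwedge$ and $f^\sharp_\wedge$ are monotone. I would then prove by transfinite induction on $\alpha\in\bO$ that $\fwedge^\alpha(\bot_C) \leq_C \gamma((f^\sharp_\wedge)^\alpha(\bot_A))$. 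The base case is immediate from $\bot_C \leq \gamma(\bot_A)$. For the successor step $\alpha = \beta+1$, the inductive hypothesis together with monotonicity of $\fwedge$ and the pointwise inequality from the previous paragraph gives
\[
\fwedge^{\beta+1}(\bot_C) \leq \fwedge(\gamma((f^\sharp_\wedge)^\beta(\bot_A))) \leq \gamma(f^\sharp_\wedge((f^\sharp_\wedge)^\beta(\bot_A))) = \gamma((f^\sharp_\wedge)^{\beta+1}(\bot_A)).
\]
For a limit ordinal $\alpha$, the inductive hypothesis and monotonicity of $\gamma$ yield $\fwedge^\alpha(\bot_C) = \bigvee_{\beta<\alpha}\fwedge^\beta(\bot_C) \leq \bigvee_{\beta<\alpha}\gamma((f^\sharp_\wedge)^\beta(\bot_A)) \leq \gamma((f^\sharp_\wedge)^\alpha(\bot_A))$. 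Taking the supremum over all $\alpha$ and applying monotonicity of $\gamma$ once more concludes $\lfp(f) \leq_C \gamma(\lfp(f^\sharp))$.

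The only genuinely conceptual step, and hence the main potential obstacle, is the very first one: recognizing that the meet-closure conditions $f(\gamma(a))\in\wpwedge(C)$ and $f^\sharp(a)\in\wpwedge(A)$ are exactly what is needed to convert the multivalued Smyth-style soundness inequality into the familiar single-valued local correctness condition $\fwedge \circ \gamma \sqsubseteq \gamma \circ f^\sharp_\wedge$. Once this reduction is in hand, the remainder of the proof is a routine rerun of the standard Cousot--Cousot fixed point approximation argument along the transfinite Kleene iteration already exploited in Lemma~\ref{lemmalfp}, and does not require additional structure on $\gamma$ beyond monotonicity.
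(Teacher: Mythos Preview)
Your proof is correct and follows essentially the same approach as the paper: both reduce to the single-valued functions $\fwedge$ and $f^\sharp_\wedge$, derive the local correctness inequality $\fwedge\circ\gamma \sqsubseteq \gamma\circ f^\sharp_\wedge$ from the Smyth soundness condition via the meet-closure hypotheses, and then invoke Lemma~\ref{lemmalfp} to identify $\lfp(f)=\lfp(\fwedge)$ and $\lfp(f^\sharp)=\lfp(f^\sharp_\wedge)$. The only difference is cosmetic: where the paper cites an external concretization-based fixed point transfer result for the final step, you spell out the standard transfinite induction along the Kleene iterates explicitly.
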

\begin{proof}
Let us consider $\fwedge:C\ra C$ and $\fwedge^\sharp:A\ra A$.  
By Lemma~\ref{lemmalfp}, $\lfp(f) = \lfp(\fwedge)$ and 
$\lfp(f^\sharp) = \lfp(\fwedge^\sharp)$. Let us check that $\fwedge^\sharp$ is a standard correct approximation of $\fwedge$. 
For any $a\in A$, $\gamma (\fwedge^\sharp (a))
\in \gamma^s (f^\sharp (a))$, hence, since $f(\gamma(a)) \preceq_S \gamma^s (f^\sharp (a))$, we have that there exists some $z\in f(\gamma(a))$
such that $z \leq \gamma(\fwedge^\sharp(a))$, so that $\fwedge (\gamma(a)=\wedge f(\gamma(a)) \leq z \leq \gamma(\fwedge^\sharp(a))$. 
Hence, by the concretization-based fixed point transfer (see \cite[Theorem~2.2.4]{minephd}), it turns out that
$\lfp(\fwedge) \leq_C \gamma(\lfp(\fwedge^\sharp))$, therefore showing that  $\lfp(f) \leq \gamma(\lfp(f^\sharp))$.
\end{proof}

Dual results hold for $H$- and $\EM$-correct approximations.  

\begin{corollary}\label{coro-main}\ \\ \
{\rm (1)} If $f^\sharp$ is a $H$-correct approximation of $f$ then $\gfp(f) \leq_C \gamma(\gfp(f^\sharp))$.\\
{\rm (2)} If $f^\sharp$ is a $\EM$-correct approximation of $f$ then
$\Fix(f) \preceq_{\EM} \gamma^s(\Fix(f^\sharp))$, in particular, $\lfp(f) \leq_C \gamma(\lfp(f^\sharp))$ and 
$\gfp(f) \leq_C \gamma(\gfp(f^\sharp))$.
\end{corollary}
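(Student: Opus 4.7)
The plan is to derive part~(1) by dualizing the argument of Theorem~\ref{lfp-approx}, and to derive part~(2) by observing that $\EM$-correctness entails simultaneously both $S$- and $H$-correctness, so that the two extremal fixed point bounds furnished by Theorem~\ref{lfp-approx} and by part~(1) already suffice to obtain the full Egli--Milner inequality between $\Fix(f)$ and $\gamma^s(\Fix(f^\sharp))$.

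For part~(1), the setup is that $f:C\ra\wpvee(C)$ is $H$-monotone and $f^\sharp$ is an $H$-correct approximation of $f$. By Corollary~\ref{coro4}(1) applied on both sides, greatest fixed points of the multivalued functions coincide with those of their single-valued max-selectors, i.e., $\gfp(f)=\gfp(\fvee)$ and $\gfp(f^\sharp)=\gfp(\fvee^\sharp)$, where $\fvee^\sharp(a)\ud \vee f^\sharp(a)$. The key local step is to verify the standard correctness inequality $\fvee(\gamma(a))\leq\gamma(\fvee^\sharp(a))$ for every $a\in A$. This should follow from the $H$-soundness condition: since $\fvee(\gamma(a))=\vee f(\gamma(a))\in f(\gamma(a))$ and $f(\gamma(a))\preceq_H \gamma^s(f^\sharp(a))$, the Hoare preorder supplies some $b\in f^\sharp(a)$ with $\fvee(\gamma(a))\leq\gamma(b)$; then $b\leq\fvee^\sharp(a)$ and monotonicity of $\gamma$ give $\gamma(b)\leq\gamma(\fvee^\sharp(a))$, whence $\fvee(\gamma(a))\leq\gamma(\fvee^\sharp(a))$. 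Once this local correctness is in place, the concretization-based greatest fixed point transfer dual to the one invoked in the proof of Theorem~\ref{lfp-approx} yields $\gfp(\fvee)\leq\gamma(\gfp(\fvee^\sharp))$, i.e., $\gfp(f)\leq_C\gamma(\gfp(f^\sharp))$.

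For part~(2), note that $\EM$-correctness packages together $S$- and $H$-correctness, since $\wp^\diamond(A)\subseteq \wpwedge(A)\cap\wpvee(A)$, $\EM$-monotonicity is by definition the conjunction of $S$- and $H$-monotonicity, and $\pem$ coincides with $\preceq_S\,\cap\,\preceq_H$. Hence Theorem~\ref{lfp-approx} and part~(1) immediately yield both $\lfp(f)\leq_C\gamma(\lfp(f^\sharp))$ and $\gfp(f)\leq_C\gamma(\gfp(f^\sharp))$. It then remains to upgrade these two extremal inequalities to the full relation $\Fix(f)\pem \gamma^s(\Fix(f^\sharp))$. For the Smyth component, given any $a\in\Fix(f^\sharp)$ one has $\lfp(f^\sharp)\leq a$, hence $\lfp(f)\leq\gamma(\lfp(f^\sharp))\leq\gamma(a)$ by monotonicity of $\gamma$, so that $\lfp(f)\in\Fix(f)$ serves as a Smyth witness for $\gamma(a)$. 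Dually, any $x\in\Fix(f)$ satisfies $x\leq\gfp(f)\leq\gamma(\gfp(f^\sharp))$, providing a Hoare witness in $\gamma^s(\Fix(f^\sharp))$.

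The only genuinely delicate step is the verification of local correctness for the single-valued max-selector in part~(1): the Hoare preorder supplies an existential witness that must be transported through the monotone concretization $\gamma$, symmetrically to the way the Smyth preorder is used in the proof of Theorem~\ref{lfp-approx}. Every remaining step reduces either to a routine dualization of Lemma~\ref{lemmalfp} and Theorem~\ref{lfp-approx} or to a one-line combination of the two extremal bounds with the monotonicity of $\gamma$.
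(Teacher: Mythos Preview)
Your proposal is correct and follows essentially the same route as the paper. Part~(1) is precisely the explicit unpacking of what the paper compresses into ``by duality from Theorem~\ref{lfp-approx}'': you reduce to the single-valued selectors $\fvee$, $\fvee^\sharp$, verify the local correctness $\fvee\circ\gamma\sqsubseteq\gamma\circ\fvee^\sharp$ using the Hoare witness (mirroring the Smyth witness used for $\fwedge$ in the proof of Theorem~\ref{lfp-approx}), and then invoke the dual fixed point transfer.

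For part~(2) there is a minor stylistic difference worth noting. The paper argues that, by Corollary~\ref{coro4}, both $\Fix(f)$ and $\gamma^s(\Fix(f^\sharp))$ lie in $\wp^\diamond(C)$, and then appeals to the general equivalence ``$X\pem Y \Leftrightarrow \wedge X\leq\wedge Y$ and $\vee X\leq\vee Y$'' for sets in $\wp^\diamond$. You instead exhibit the Smyth and Hoare witnesses directly: $\lfp(f)$ dominates below every $\gamma(a)$ with $a\in\Fix(f^\sharp)$, and $\gamma(\gfp(f^\sharp))$ dominates above every $x\in\Fix(f)$. Your argument is slightly more elementary in that it does not require checking that $\gamma^s(\Fix(f^\sharp))\in\wp^\diamond(C)$ (which the paper asserts but relies on $\gamma$ sending the least and greatest elements of $\Fix(f^\sharp)$ to the least and greatest elements of its image, a consequence of monotonicity); conversely, the paper's formulation makes the reduction to the two extremal inequalities a one-liner. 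Both arrive at the same conclusion by the same underlying mechanism.
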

\begin{proof}
By duality from Theorem~\ref{lfp-approx}. In particular, point (2) follows because, by Corollary~\ref{coro4}, 
$\Fix(f)\in \wp^\diamond(C)$, $\Fix(f^\sharp)\in\wp^\diamond(A)$ and therefore $\gamma^s(\Fix(f^\sharp))\in \wp^\diamond(C)$, 
so that $\Fix(f) \preceq_{\EM} \gamma^s(\Fix(f^\sharp))$ iff $\lfp(f) \leq \gamma(\lfp(f^\sharp))$ and
$\gfp(f) \leq \gamma(\gfp(f^\sharp))$.
\end{proof}

The approximation of least/greatest fixed points of multivalued functions can also be easily given for an abstraction map 
$\alpha:C\ra A$. In this case, a $S$-monotone map $f^\sharp:A \ra \wpwedge(A)$ is a correct approximation of a concrete $S$-monotone map
$f:C\ra \wpwedge(C)$ when, for any $c\in C$, $\alpha^s(f(c)) \preceq_S f^\sharp (\alpha(c))$, where $\alpha^s:\wp(C)\ra \wp(A)$.
Here, fixed point approximation states that $\alpha(\lfp(f)) \leq_A
\lfp(f^\sharp)$.

\subsection{Galois Connection-based Approximations}
Let us now consider the ideal case of abstract interpretation where the
best
approximations in an abstract domain $A$
of concrete objects always exist, that is, 
$A$ is specified by a GC $(\alpha,C,A,\gamma)$.
However, 
recall that here $\tuple{\wpwedge(C),\preceq_S}$ and $\tuple{\wpwedge(A),\preceq_S}$ are mere preorders, and not posets. 
Then, given two preorders $\tuple{X,\preceq_X}$ and $\tuple{Y,\preceq_Y}$, we say that 
two functions $\beta:X\ra Y$ and $\delta:Y\ra X$ specify
a preorder-GC $(\beta,X,Y,\delta)$ when $\delta$ and $\beta$ are monotone (meaning, e.g.\ for $\beta$, 
that $x\preceq_X x'\,\Ra\, \beta(x) \preceq_Y \beta(x')$) and the equivalence $\beta(x) \preceq_Y y \,\Lra\, x \preceq_X \delta(y)$ holds.  
As expected, 
it turns out that GCs induce preorder-GCs for Smyth, Hoare and Egli-Milner preorders. 

\begin{lemma}\label{lemmaGC}
Let $(\alpha,C,A,\gamma)$ be a Galois connection. 
Then,
$\big(\alpha^s, \tuple{\wpwedge(C),\preceq_S}, \tuple{\wpwedge(A),\preceq_S},\gamma^s\big)$, $\big(\alpha^s,$ ${\tuple{\wpvee(C),\preceq_H},} \tuple{\wpvee(A),\preceq_H},\gamma^s\big)$, 
and $\big(\alpha^s, \tuple{\wp^\diamond(C),\preceq_{\EM}}, \tuple{\wp^\diamond(A),\preceq_{\EM}},\gamma^s\big)$ are preorder-Galois connections. 
\end{lemma}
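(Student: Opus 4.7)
The plan is to verify the two requirements of a preorder-Galois connection for each of the three cases, exploiting the simple characterizations of $\preceq_S,\preceq_H,\preceq_{\EM}$ on the subsets $\wpwedge,\wpvee,\wp^\diamond$ that were recorded right after the definition of the Veinott relation. Throughout I write $\fwedge_X=\wedge X$ and $\fvee_X=\vee X$ for readability in the sketch.

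First I would check that $\alpha^s$ and $\gamma^s$ are well-defined on the restricted domains. For $\alpha^s:\wpwedge(C)\ra\wpwedge(A)$, since $\alpha$ is monotone and $\wedge X \in X$, we get $\alpha(\wedge X)\in \alpha^s(X)$ and $\alpha(\wedge X) \leq \alpha(x)$ for every $x\in X$, hence $\wedge \alpha^s(X) = \alpha(\wedge X)\in \alpha^s(X)$. For $\gamma^s:\wpwedge(A)\ra \wpwedge(C)$, since $\gamma$ is co-additive, $\wedge \gamma^s(Y)=\gamma(\wedge Y)\in \gamma^s(Y)$. Dually, for $\wpvee$, additivity of $\alpha$ gives $\vee \alpha^s(X) = \alpha(\vee X)\in \alpha^s(X)$, while for $\gamma^s$ we use monotonicity of $\gamma$ together with $\vee Y\in Y$ to obtain $\vee \gamma^s(Y) = \gamma(\vee Y)\in \gamma^s(Y)$. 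Combining the two yields well-definedness on $\wp^\diamond$.

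Next, monotonicity of $\alpha^s$ and $\gamma^s$ w.r.t.\ $\preceq_S$, $\preceq_H$, $\preceq_{\EM}$ is already known (it is established for $\gamma^s$ in Section~\ref{concr}, and the argument for $\alpha^s$ is verbatim the same, using monotonicity of $\alpha$). So the only real content is the adjunction equivalence. Here the characterizations recorded after the Veinott relation do all the work: on $\wpwedge$,
\[
\alpha^s(X)\preceq_S Y \;\Lra\; \wedge\alpha^s(X) \leq \wedge Y \;\Lra\; \alpha(\wedge X)\leq \wedge Y \;\Lra\; \wedge X \leq \gamma(\wedge Y) \;\Lra\; \wedge X \leq \wedge \gamma^s(Y) \;\Lra\; X \preceq_S \gamma^s(Y),
\]
where the pivotal step uses the Galois connection $\alpha(c)\leq_A a \Lra c \leq_C \gamma(a)$ applied to $c=\wedge X$, $a=\wedge Y$. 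The Hoare case is entirely dual, using $\alpha(\vee X)=\vee\alpha^s(X)$ (additivity of $\alpha$) and $\gamma(\vee Y) = \vee \gamma^s(Y)$ (established in the first step). The Egli-Milner case then falls out immediately, since $\preceq_{\EM}$ on $\wp^\diamond$ is the conjunction of $\preceq_S$ and $\preceq_H$ characterized by meets and joins, and the two equivalences above can be combined.

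The main obstacle is purely bookkeeping: one has to be careful that $\alpha^s$ restricted to $\wpvee(C)$ truly lands in $\wpvee(A)$ and that $\gamma^s$ restricted to $\wpwedge(A)$ truly lands in $\wpwedge(C)$, because $\alpha$ is only additive (not co-additive) and $\gamma$ is only co-additive (not additive). The asymmetric use of additivity of $\alpha$ for the $\vee$-closure side and co-additivity of $\gamma$ for the $\wedge$-closure side, with pure monotonicity handling the two ``hard'' directions (thanks to the fact that $\wedge X\in X$ and $\vee Y\in Y$ are already witnesses), is what makes the lemma work; once this is in place, the adjunction calculation is routine.
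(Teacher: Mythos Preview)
Your proof is correct and takes a genuinely different route from the paper's. The paper argues by direct element-chasing: for the Smyth case it takes an arbitrary $\gamma(y)\in\gamma^s(Y)$, uses $\alpha^s(X)\preceq_S Y$ to find $\alpha(x)\leq_A y$, and then applies the scalar adjunction to get $x\leq_C\gamma(y)$; the converse and the Hoare/Egli-Milner cases are declared analogous. In particular, the paper's argument works for arbitrary subsets $X,Y$ (it never uses that they lie in $\wpwedge$ or $\wpvee$), and it silently skips the check that $\alpha^s$ and $\gamma^s$ land in the stated codomains. Your approach instead exploits the characterizations $X\preceq_S Y\Leftrightarrow \wedge X\leq\wedge Y$ (on $\wpwedge$) and $X\preceq_H Y\Leftrightarrow \vee X\leq\vee Y$ (on $\wpvee$) to collapse the preorder adjunction to the underlying scalar Galois connection in one line, and you explicitly handle well-definedness with the nice observation that monotonicity alone suffices for the ``wrong-variance'' directions ($\alpha^s$ into $\wpwedge$, $\gamma^s$ into $\wpvee$) because $\wedge X\in X$ and $\vee Y\in Y$ provide witnesses. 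Your argument is tighter and more complete on codomain issues; the paper's is more elementary in that it does not invoke the glb/lub characterizations and would in fact establish the adjunction on all of $\wp(C)$, $\wp(A)$.
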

\begin{proof}
Let us check that $\alpha^s$ is $S$-monotone: if $X\preceq_S Y$ and $\alpha(y)\in \alpha^s(Y)$ then there exists $x\in X$ such that 
$x\leq_C y$, so that, by monotonicity of $\alpha$, $\alpha(x) \leq_A \alpha(y)$, and therefore $\alpha^s(X)\preceq_S \alpha^s(Y)$. 
Analogously, $\gamma^s$ is $S$-monotone. Let us check that $\alpha^s(X) \preceq_S Y \,\Ra \, X \preceq_S \gamma^s(Y)$: 
if $\gamma(y)\in \gamma^s(Y)$ then there exists $\alpha(x)\in \alpha^s(X)$ such that $\alpha(x) \leq_A y$, and, since 
$(\alpha,C,A,\gamma)$ is a GC, this implies that $x\leq_C \gamma(y)$, so that $X \preceq_S \gamma^s(Y)$. Analogously, 
it turns out that $X \preceq_S \gamma^s(Y)\, \Ra\, \alpha^s(X) \preceq_S Y$. Hence, this shows that 
$\big(\alpha^s, \tuple{\wpwedge(C),\preceq_S}, \tuple{\wpwedge(A),\preceq_S},\gamma^s\big)$ is a preorder-GC. 
The proofs for Hoare and Egli-Milner preorders are analogous. 
\end{proof}

The ideal Galois connection-based framework allows us to define best correct approximations of multivalued functions. 
If $f:C\ra \wp(C)$ and $(\alpha,C,A,\gamma)$ is a GC then its \emph{best correct approximation} on the abstract domain $A$ is
the multifunction $f^A: A \ra \wp(A)$ defined as follows: $f^A(a) \ud \alpha^s(f(\gamma(a)))$. 
In particular, if $f:C\ra \wpwedge(C)$ is $S$-monotone then $f^A:A\ra \wpwedge(A)$ turns out to be $S$-monotone.
Analogously for Hoare and Egli-Milner preorders. 
Similarly to standard abstract
interpretation \cite{CC79}, it turns out that $f^A$ is the best among the $S$-correct approximations of $f$,
as formalized by the following result. 
\begin{lemma}
A $S$-monotone
correspondence
$f^\sharp:A \ra \wpwedge(A)$ is a $S$-correct approximation of $f$ iff for any $a\in A$, $f^A(a) \preceq_S f^\sharp(a)$. 
Also, analogous characterizations hold for $H$- and $\EM$-correct approximations. 
\end{lemma}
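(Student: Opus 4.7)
The plan is to reduce the statement to the preorder-Galois connection established in Lemma~\ref{lemmaGC}, applied pointwise in the abstract domain. Unrolling definitions, the soundness clause in the definition of $S$-correct approximation says exactly that for every $a\in A$,
\[
f(\gamma(a)) \preceq_S \gamma^s(f^\sharp(a)),
\]
while the conclusion to be proved reads $\alpha^s(f(\gamma(a))) \preceq_S f^\sharp(a)$, since $f^A(a)\ud \alpha^s(f(\gamma(a)))$. Thus the lemma is essentially a pointwise adjunction statement.

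First I would fix an arbitrary $a\in A$ and set $X\ud f(\gamma(a))\in \wpwedge(C)$ and $Y\ud f^\sharp(a)\in \wpwedge(A)$; the membership $X\in \wpwedge(C)$ follows from $f:C\ra \wpwedge(C)$ being $S$-monotone, and $Y\in\wpwedge(A)$ is given. Then I would invoke the preorder-Galois connection $\bigl(\alpha^s,\tuple{\wpwedge(C),\preceq_S},\tuple{\wpwedge(A),\preceq_S},\gamma^s\bigr)$ provided by Lemma~\ref{lemmaGC} to obtain the equivalence
\[
\alpha^s(X) \preceq_S Y \;\Lra\; X \preceq_S \gamma^s(Y),
\]
which is precisely $f^A(a)\preceq_S f^\sharp(a) \Lra f(\gamma(a))\preceq_S \gamma^s(f^\sharp(a))$. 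Quantifying over $a\in A$ yields both implications of the iff. The $S$-monotonicity and $\wpwedge$-valuedness of $f^\sharp$ are assumed on both sides of the claim, so no separate fixed-point check is needed; in particular, $f^A$ itself lies in $\wpwedge(A)$ and is $S$-monotone, as noted in the paragraph preceding the lemma.

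For the $H$- and $\EM$-variants, the argument is identical, only swapping in the other two preorder-GCs supplied by Lemma~\ref{lemmaGC}:
$\bigl(\alpha^s,\tuple{\wpvee(C),\preceq_H},\tuple{\wpvee(A),\preceq_H},\gamma^s\bigr)$ for the $H$-case, and $\bigl(\alpha^s,\tuple{\wp^\diamond(C),\preceq_{\EM}},\tuple{\wp^\diamond(A),\preceq_{\EM}},\gamma^s\bigr)$ for the $\EM$-case. In each case the required domain conditions ($f^\sharp$ valued in $\wpvee(A)$ or $\wp^\diamond(A)$ and monotone for the appropriate preorder) are already built into the definition of correct approximation on both sides of the equivalence.

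There is really no obstacle here: the whole content of the lemma is that best approximation $\preceq$-precedes any correct approximation, and this is just the adjunction inequality read in the right direction. The only subtlety worth a brief comment is that $\preceq_S$, $\preceq_H$ are preorders (not partial orders), so ``best'' means best up to the equivalence induced by these preorders; but since only one-sided inequalities appear in the statement, the lack of antisymmetry causes no issue.
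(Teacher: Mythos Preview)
Your proposal is correct and follows the same approach as the paper: both reduce the statement pointwise to the preorder-Galois connection of Lemma~\ref{lemmaGC}, using the equivalence $\alpha^s(X)\preceq_S Y \Lra X\preceq_S \gamma^s(Y)$ with $X=f(\gamma(a))$ and $Y=f^\sharp(a)$. Your write-up simply unpacks in more detail what the paper records in one line.
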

\begin{proof}
An easy consequence of Lemma~\ref{lemmaGC}, since for any $a\in A$, 
$f^A(a)=\alpha^s(f(\gamma(a)) \preceq_S f^\sharp(a)$ iff for any $a\in A$, $f(\gamma(a)) \preceq_S \gamma^s(f^\sharp(a))$. 
\end{proof}

Hence, it turns out that the fixed point approximations given by 
Theorem~\ref{lfp-approx}  and Corollary~\ref{coro-main} apply to the best correct approximations $f^A$. 

\paragraph{Completeness.} In abstract interpretation, completeness \cite{CC79,grs00} formalizes an ideal situation where the abstract function $f^\sharp$ on $A$
is capable of not losing information w.r.t.\ the abstraction in $A$ of the concrete function $f$, that is, the equality
$\alpha(f(c))=f^\sharp(\alpha(c))$ always holds. As a key consequence, completeness lifts to fixed points, meaning that $\alpha(\lfp(f)) =
\lfp(f^\sharp)$ holds. Let us show that this also holds for multivalued functions.  
An abstract $S$-monotone function $f^\sharp:A\ra \wpwedge(A)$ is a \emph{complete approximation} of a $S$-monotone function 
$f:C\ra \wpwedge(C)$ when for any $c\in C$, $\alpha^s(f(c)) = f^\sharp(\alpha(c))$. 

\begin{lemma}[\textbf{Complete Least Fixed Point Approximation}]
If $f^\sharp$ is a complete approximation of $f$ then $\alpha(\lfp(f)) = \lfp(f^\sharp)$.
\end{lemma}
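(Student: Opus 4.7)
The plan is to reduce the multivalued completeness statement to the classical single-valued completeness result for the functions $\fwedge$ and $f^\sharp_\wedge$, and then invoke Lemma~\ref{lemmalfp} on both sides to transfer the conclusion back to the multivalued $f$ and $f^\sharp$.

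First I would establish the key lemma that $f^\sharp_\wedge:A\ra A$ is a (standard, single-valued) complete approximation of $\fwedge:C\ra C$, that is, $\alpha(\fwedge(c)) = f^\sharp_\wedge(\alpha(c))$ for all $c\in C$. The subtle point is that $\alpha$ is additive but in general not meet-preserving, so this equality cannot be derived by blindly swapping $\alpha$ with $\wedge$. The way around this obstacle is to exploit the assumption $f(c)\in \wpwedge(C)$ and $f^\sharp(\alpha(c))\in\wpwedge(A)$, which says that the infima are actually attained as minima. Concretely: since $\fwedge(c)\in f(c)$, monotonicity of $\alpha$ together with $f^\sharp(\alpha(c))=\alpha^s(f(c))$ gives $\alpha(\fwedge(c))\in f^\sharp(\alpha(c))$, hence $f^\sharp_\wedge(\alpha(c))\leq \alpha(\fwedge(c))$. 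For the reverse inequality, $f^\sharp_\wedge(\alpha(c))\in f^\sharp(\alpha(c))=\alpha^s(f(c))$, so $f^\sharp_\wedge(\alpha(c))=\alpha(y)$ for some $y\in f(c)$; since $\fwedge(c)\leq y$, monotonicity of $\alpha$ yields $\alpha(\fwedge(c))\leq f^\sharp_\wedge(\alpha(c))$.

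Once single-valued completeness $\alpha\circ \fwedge = f^\sharp_\wedge \circ \alpha$ is in hand, I would invoke the classical complete fixed point transfer of abstract interpretation (see \cite{CC79}), which is proved by a straightforward transfinite induction on the Kleene iterates using that $\alpha$ preserves arbitrary lubs, to conclude $\alpha(\lfp(\fwedge)) = \lfp(f^\sharp_\wedge)$.

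Finally, Lemma~\ref{lemmalfp} yields $\lfp(f)=\lfp(\fwedge)$ and, applied to the abstract side, $\lfp(f^\sharp)=\lfp(f^\sharp_\wedge)$, so combining these with the previous equality gives $\alpha(\lfp(f))=\lfp(f^\sharp)$ as desired. The main obstacle is the one addressed in the first step, namely handling the apparent mismatch between the non-meet-preserving $\alpha$ and the meets defining $\fwedge$ and $f^\sharp_\wedge$; everything else is a mechanical composition of results already available in the paper.
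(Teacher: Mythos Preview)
Your proposal is correct and follows essentially the same route as the paper: reduce to the single-valued functions $\fwedge$ and $\fwedge^\sharp$, establish $\alpha\circ\fwedge=\fwedge^\sharp\circ\alpha$ via the fact that $\fwedge(c)\in f(c)$ so that $\alpha(\fwedge(c))$ is the minimum of $\alpha^s(f(c))=f^\sharp(\alpha(c))$, then invoke the classical complete fixed point transfer and Lemma~\ref{lemmalfp}. Your two-inequality argument for the key step is slightly more explicit than the paper's one-line version, but the content is the same.
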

\begin{proof}
By Lemma~\ref{lemmalfp}, $\lfp(f) = \lfp(\fwedge)$ and 
$\lfp(f^\sharp) = \lfp(\fwedge^\sharp)$. Since $\fwedge (c) \in f(c)$, we have that $\alpha(\fwedge (c)) \in \alpha^s (f(c))$, so
that $ \alpha(\fwedge(c))=\wedge \alpha^s(f(c))$. By hypothesis, $\wedge \alpha^s(f(c)) = \wedge f^\sharp (\alpha(c)) = \fwedge^\sharp(\alpha(c))$, 
so that $ \alpha \circ \fwedge = \fwedge^\sharp \circ \alpha$ holds. Thus, by complete fixed point transfer \cite[Theorem~7.1.0.4]{CC79}, 
$\alpha(\lfp(\fwedge)) = \lfp(\fwedge^\sharp)$.
\end{proof}

\subsection{Approximations of Best Response Correspondences}\label{abrc}

The above abstract interpretation-based approach for
multivalued functions can be applied to (quasi)super\-modular games by approximating their best response correspondences. 
In particular, one can abstract both the $i$-th best response correspondences $B_i:S_{-i}\ra \SL(S_i)$ and
the overall best response $B:S\ra \SL(S)$.  

\begin{example}\label{ex2}\rm
Let us consider the game $\Gamma$ in Example~\ref{ex1} and the abstraction $A$ of its strategy space $C\times C$
defined in Example~\ref{ex-comp}. 
Then, one can define the best correct approximation $B^A$ in $A$ of the best response function $B:C\times C \ra \SL(C\times C)$, that is,
$B^A:A \ra \wp(A)$ is defined as $B^A(a) \ud \alpha^s(B(\gamma(a))=\alpha^s(B(a))=\{\alpha(s_1,s_2)\in A~|~ (s_1,s_2) \in B(a)\}$. We therefore have that:
\begin{align*}
&B^A(2,2) = \alpha^s(\{(2,3)\})=\{(3,4)\},~~B^A(3,4) = \alpha^s(\{(2,3),(5,3)\})=\{(3,4),(6,6)\},\\
&B^A(4,4)= \alpha^s(\{(2,4),(5,4)\}) = \{(3,4),(6,6)\},~~B^A(3,5) = \alpha^s(\{(5,3)\}) = \{(6,6)\},\\
&B^A(4,5) = \alpha^s(\{((5,4)\}) = \{(6,6)\},~~B^A(6,6) = \alpha^s(\{(5,4),(6,4)\}) = \{(6,6)\}.
\end{align*}
Hence, $\Fix(B^A)=\{(3,4),(6,6)\}$. 
Therefore, by Theorem~\ref{lfp-approx} and Corollary~\ref{coro-main}, here we have that $\lne(\Gamma)=\lfp(B) = (2,3) \leq (3,4) = \lfp(B^A)$ and $\gne(\Gamma) = \gfp(B) = (5,4) \leq (6,6) =\gfp(B^A)$. 
\qed
\end{example}


\section{Games with Abstract Strategy Spaces}

Let us consider a game $\Gamma = \langle S_i,u_i\rangle_{i=1}^n$ and a corresponding  
family $\cG = (\alpha_i,S_i ,A_i,\gamma_i)_{i=1}^n$ of GCs of the strategy spaces $S_i$. 
By Lemma~\ref{prod2}, $(\alpha,\times_{i=1}^n S_i,\times_{i=1}^n A_i,\gamma)$ specifies a nonrelational product abstraction 
of the whole strategy space $\times_{i=1}^n S_i$. 
We define the $i$-th utility function $u_i^\cG: \times_{i=1}^n A_i \ra \bR^{N_i}$ on the abstract strategy space 
$\times_{i=1}^n A_i$ simply by restricting $u_i$ on $\gamma(\times_{i=1}^n A_i)$ as follows: $u_i^\cG(a) \ud u_i(\gamma(a))$. 
We point out that this definition is a form of generalization of the restricted games considered by Echenique~\cite[Section~2.3]{ech2007}.

\begin{lemma}\label{lemma-sm2}
If  $u_i(\cdot,s_{-i})$ is (quasi)supermodular and all the GCs in $\cG$ are finitely disjunctive then 
$u_i^\cG(\cdot, a_{-i}):A_i \ra \bR^{N_i}$ is (quasi)super\-modular. Also, 
if $u_i(s_i, \cdot)$ is monotone then $u_i^\cG(a_i, \cdot):A_{-i}\ra \bR^{N_i}$ is monotone. 
\end{lemma}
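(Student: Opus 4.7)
The plan is to unfold the definition $u_i^\cG(a) = u_i(\gamma(a)) = u_i(\gamma_i(a_i), \gamma_{-i}(a_{-i}))$ and reduce each property to the corresponding property of the concrete utility $u_i$ evaluated at concrete points in the image of $\gamma$. The key lattice-theoretic observation is that each $\gamma_i : A_i \to S_i$ preserves both finite meets and finite joins: co-additivity is automatic for the right adjoint of any GC (property (2) recalled in Section~\ref{ai-sec}), while preservation of finite joins is precisely the finite disjunctivity hypothesis on $\cG$. Hence, for all $a_1, a_2 \in A_i$,
\[
\gamma_i(a_1 \vee a_2) = \gamma_i(a_1) \vee \gamma_i(a_2), \qquad \gamma_i(a_1 \wedge a_2) = \gamma_i(a_1) \wedge \gamma_i(a_2).
\]

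For the supermodular case, fix $a_{-i} \in A_{-i}$ and set $s_{-i} \ud \gamma_{-i}(a_{-i})$. Using the identities above, the defining inequality of supermodularity for $u_i^\cG(\cdot, a_{-i})$ at $a_1, a_2 \in A_i$ is exactly
\[
u_i(\gamma_i(a_1) \vee \gamma_i(a_2), s_{-i}) + u_i(\gamma_i(a_1) \wedge \gamma_i(a_2), s_{-i}) \geq u_i(\gamma_i(a_1), s_{-i}) + u_i(\gamma_i(a_2), s_{-i}),
\]
which follows from supermodularity of $u_i(\cdot, s_{-i})$ applied to $c_1 \ud \gamma_i(a_1)$ and $c_2 \ud \gamma_i(a_2)$. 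The quasisupermodular case is entirely analogous: the two implications defining quasisupermodularity of $u_i^\cG(\cdot, a_{-i})$ translate, via the same meet/join identities, into the corresponding implications for $u_i(\cdot, s_{-i})$ evaluated at $\gamma_i(a_1)$ and $\gamma_i(a_2)$.

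For the monotonicity statement, fix $a_i \in A_i$ and suppose $a_{-i} \leq a'_{-i}$ in the componentwise order on $A_{-i}$. Since each $\gamma_j$ is monotone (as a right adjoint), we obtain $\gamma_{-i}(a_{-i}) \leq \gamma_{-i}(a'_{-i})$ in $S_{-i}$, and monotonicity of $u_i(\gamma_i(a_i), \cdot)$ immediately gives $u_i^\cG(a_i, a_{-i}) \leq u_i^\cG(a_i, a'_{-i})$. Note that this part does not require disjunctivity, only the general monotonicity of concretizations.

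The main subtlety---and the only place where the disjunctivity hypothesis is genuinely used---is ensuring that $\gamma_i$ commutes with the binary join $\vee$ in $A_i$; without finite disjunctivity one would only have $\gamma_i(a_1 \vee a_2) \geq \gamma_i(a_1) \vee \gamma_i(a_2)$, and the translation of (quasi)supermodularity from the abstract to the concrete side would break. Everything else is a straightforward unpacking of definitions.
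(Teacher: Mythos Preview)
Your proof is correct and follows essentially the same approach as the paper's: unfold $u_i^\cG(a)=u_i(\gamma(a))$, use that $\gamma_i$ preserves finite meets (co-additivity of right adjoints) and finite joins (finite disjunctivity), and reduce each property to the corresponding one for $u_i$ at the concrete points $\gamma_i(a_1),\gamma_i(a_2)$ and $\gamma_{-i}(a_{-i})$. Your explicit remark that disjunctivity is needed only for the join identity (and that monotonicity of $u_i^\cG(a_i,\cdot)$ requires only monotonicity of $\gamma_{-i}$) is a useful observation that the paper leaves implicit.
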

\begin{proof}
Let us check that $u_i^\cG(\cdot, a_{-i})$ is supermodular:
\begin{align*}
u_i^\cG (a_i \vee_{A_i} b_i,a_{-i}) + u_i^\cG(a_i \wedge_{A_i} b_i,a_{-i}) &= \qquad \text{[by definition]}\\
u_i (\gamma_i(a_i \vee_{A_i} b_i),\gamma_{-i}(a_{-i})) + 
u_i (\gamma_i(a_i  \wedge_{A_i} b_i),\gamma_{-i}(a_{-i})) &= \qquad \text{[$\cG$ are finitely disjunctive GCs]}\\
u_i (\gamma_i(a_i) \vee_{i} \gamma_i(b_i),\gamma_{-i}(a_{-i})) +  u_i (\gamma_i(a_i)  \wedge_{i} \gamma_i(b_i),\gamma_{-i}(a_{-i}))
&\geq \qquad \text{[by supermodularity of $u_i$]}\\
u_i (\gamma_i(a_i),\gamma_{-i}(a_{-i})) +  u_i (\gamma_i(b_i),\gamma_{-i}(a_{-i}))
&= \qquad \text{[by definition]}\\
u_i^\cG (a_i,a_{-i}) + u_i^\cG(b_i,a_{-i})
\end{align*}
The proof of quasisupermodularity is analogous. Let us also check that $u_i^\cG(a_i, \cdot)$ is monotone. Consider
$a_{-i} \leq b_{-i}$, so that, by monotonicity of $\gamma_{-i}$, we have that $\gamma_{-i}(a_{-i}) \leq \gamma_{-i}(b_{-i})$. Hence,
by monotonicity of $u_i (\gamma_i(a_i), \cdot)$, we obtain:
$u_i^\cG (a_i, a_{-i}) =
u_i (\gamma_i(a_i), \gamma_{-i}(a_{-i}))\leq u_i (\gamma_i(a_i), \gamma_{-i}(b_{-i})) = u_i^\cG (a_i, b_{-i})$.
\end{proof}

Let us also 
observe that if $u_i(s_i,s_{-i})$ has increasing differences (the single crossing property), $X\subseteq \times_{i=1}^n S_i$ is any subset 
of the strategy space and
$u_{i_{{\!/\!X}}}:X\ra \bR^{N_i}$ is the mere restriction of $u_i$ to the subset $X$ then $u_{i_{{\!/\!X}}}$ still has increasing differences (the single crossing property). 
Hence, in particular, this holds for $u_i^\cG:\times_{i=1}^n A_i \ra \bR$. As a consequence of this and of Lemma~\ref{lemma-sm2}, we obtain
the following abstract (quasi)supermodular games. 
      
\begin{corollary}\label{coro-sm2}
If $\Gamma=\langle S_i,u_i\rangle_{i=1}^n$ is a (quasi)supermodular game and $\cG = (\alpha_i,S_i,A_i,\gamma_i)_{i=1}^n$ is a family of
finitely disjunctive GCs then  $\Gamma^\cG \ud \langle A_i,u_i^\cG\rangle_{i=1}^n$ is a (quasi)supermodular game.
\end{corollary}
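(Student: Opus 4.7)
The plan is to verify the two defining conditions of a (quasi)supermodular game for $\Gamma^\cG = \langle A_i, u_i^\cG\rangle_{i=1}^n$, namely that each abstract strategy space $A_i$ is a complete lattice and that the abstract utilities enjoy the required order-theoretic properties. The complete lattice structure of each $A_i$ is immediate: since $(\alpha_i, S_i, A_i, \gamma_i)$ is a Galois connection between complete lattices, $A_i$ is by definition a complete lattice, and hence so is the product $\times_{i=1}^n A_i$ under the componentwise order.

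The condition on each $u_i^\cG(\cdot, a_{-i})$ being (quasi)supermodular in its first argument is exactly what Lemma~\ref{lemma-sm2} delivers, given the finitely disjunctive assumption on the family $\cG$. The key use of finite disjunctivity there is that $\gamma_i$ preserves finite joins (and, being a right adjoint, all meets), so that $\gamma_i(a_i \vee_{A_i} b_i) = \gamma_i(a_i) \vee_i \gamma_i(b_i)$ and $\gamma_i(a_i \wedge_{A_i} b_i) = \gamma_i(a_i) \wedge_i \gamma_i(b_i)$, allowing the supermodularity inequality for $u_i$ on $S_i$ to transport to $A_i$.

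For the second condition I would invoke the observation stated just before the corollary: increasing differences and the single crossing property are inherited by restriction of a function to any subset of its domain. Concretely, for $(a_i, a_{-i}) \leq (a_i', a_{-i}')$ in $A_i \times A_{-i}$, monotonicity of $\gamma_i$ and $\gamma_{-i}$ yields $(\gamma_i(a_i), \gamma_{-i}(a_{-i})) \leq (\gamma_i(a_i'), \gamma_{-i}(a_{-i}'))$ in $S_i \times S_{-i}$, so the inequality
\[
u_i(\gamma_i(a_i'), \gamma_{-i}(a_{-i})) - u_i(\gamma_i(a_i), \gamma_{-i}(a_{-i})) \leq u_i(\gamma_i(a_i'), \gamma_{-i}(a_{-i}')) - u_i(\gamma_i(a_i), \gamma_{-i}(a_{-i}'))
\]
expressing increasing differences of $u_i$ translates, by the definition $u_i^\cG(a) = u_i(\gamma(a))$, directly into increasing differences for $u_i^\cG$. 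The single crossing argument is identical, replacing the additive inequality with its order-preservation statement.

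There is no real obstacle here: the proof is an assembly of Lemma~\ref{lemma-sm2}, the elementary restriction observation, and the basic closure of Galois connections under products (Lemma~\ref{prod2}). The only subtle point worth flagging in the write-up is that finite disjunctivity is genuinely needed for supermodularity---without $\gamma_i$ preserving finite joins, the images $\gamma_i(a_i \vee_{A_i} b_i)$ and $\gamma_i(a_i) \vee_i \gamma_i(b_i)$ can differ, breaking the chain of equalities---whereas increasing differences and the single crossing property require only monotonicity of $\gamma$, being purely restriction-stable properties.
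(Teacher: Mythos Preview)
Your proposal is correct and follows essentially the same route as the paper: the corollary is obtained by combining Lemma~\ref{lemma-sm2} for the (quasi)supermodularity of $u_i^\cG(\cdot,a_{-i})$ with the restriction-stability observation immediately preceding the corollary for increasing differences and the single crossing property. Your write-up is in fact slightly more explicit than the paper's, spelling out the monotonicity-of-$\gamma$ argument for increasing differences and flagging precisely where finite disjunctivity is used.
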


Let us see an array of examples of abstract games. 

\begin{example}\label{ex3}\rm
Consider the game $\Gamma$ in Example~\ref{ex1} and the product abstraction 
$A_1\times A_2\in \Abs(S_1\times S_2)$ defined in Example~\ref{ex-comp}. 
The restricted game $\Gamma^\sharp$ of Lemma~\ref{lemma-sm2} on the abstract strategy space $\{3,5,6\}\times \{2,6\}$ is therefore
specified by the following payoff matrix:

\begin{center}
\renewcommand{\arraystretch}{1.2}
\begin{tabular}{c|C{25pt}|C{25pt}|}
\multicolumn{1}{c}{}
& \multicolumn{1}{c}{\textbf{2}} &  \multicolumn{1}{c}{\textbf{6}}\\
\cline{2-3}
\textbf{6} &-1,\,-1 &6,\,5 \\
\cline{2-3}
\textbf{5} & 0,\,2 & 6,\,5 \\
\cline{2-3}
\textbf{3} & 2,\,4 & 3,\,2 \\
\cline{2-3}
\end{tabular}
\end{center}

\medskip
\noindent
Since both $A_1$ and $A_2$ are trivially disjunctive abstractions, 
by Corollary~\ref{coro-sm2}, it turns out that $\Gamma^\sharp$ is supermodular. The best response correspondences $B_i^\sharp:A_{-i}\ra \SL(A_i)$ 
for the supermodular game $\Gamma^\sharp$ are therefore as follows:
\[
\begin{array}{llllll}
B_1^\sharp(2)=\{3\},& B_1^\sharp(6)=\{5,6\},& 
B_2^\sharp(3)=\{2\}; & B_2^\sharp(5)=\{6\},& B_2^\sharp (6)=\{6\}.& 
\end{array}
\]
We observe that $B_2^\sharp$ is not a $S$-correct approximation of $B_2$ because: 
$B_2(3)=\{3\} \not{\!\,\!\!\preceq_S}\: \{2\} = B_2^\sharp(3)$. Indeed, it turns out that 
$\Eq(\Gamma^\sharp)=\{(3,2), (5,6), (6,6)\}$, so that 
$\lne(\Gamma)=(2,3) \not\leq (3,2) = \lne(\Gamma^\sharp)$.
Thus, in this case, the solutions of the abstract game $\Gamma^\sharp$ do not correctly approximate the solutions of $\Gamma$. 

\noindent
Instead, following Section~\ref{abrc} and analogously to Example~\ref{ex2}, 
one can define the best correct approximation $B^A:A\ra \SL(A)$ in $A\ud A_1 \times A_2$ of the best response correspondence 
$B$ of $\Gamma$,
that is, $B^A(a_1,a_2) = \{(\alpha_1(s_1),\alpha_2(s_2)) \in A~|~ (s_1,s_2) \in B(a_1,a_2)\}$ acts as follows:
\begin{align*}
&B^A(3,2) = \{(3,6)\},~~B^A(3,6) = \{(5,6),(6,6)\},~~B^A(5,2) = \{(3,6)\},\\
&B^A(5,6) = \{(5,6),(6,6)\},~~B^A(6,2) = \{(3,6)\},~~B^A(6,6) = \{(5,6),(6,6)\}.
\end{align*}
Hence, $\Fix(B^A)=\{(5,6),(6,6)\}$, so that 
$\lne(\Gamma)=\lfp(B) = (2,3) \leq (5,6) = \lfp(B^A)$ and $\gne(\Gamma) = \gfp(B) = (5,4) \leq (6,6) =\gfp(B^A)$. \qed
\end{example}

\begin{example}\label{ex4}\rm 
In Example~\ref{ex3}, let us consider the abstraction $A_2=\{4,6\}\in \Abs(S_2)$, so that the supermodular game
$\Gamma^\sharp$ is given by the following payoff matrix: 
\begin{center}
\renewcommand{\arraystretch}{1.2}
\begin{tabular}{c|C{25pt}|C{25pt}|}
\multicolumn{1}{c}{}
& \multicolumn{1}{c}{\textbf{4}} &  \multicolumn{1}{c}{\textbf{6}}\\
\cline{2-3}
\textbf{6} &5,\,6 &6,\,5 \\
\cline{2-3}
\textbf{5} & 6,\,6 & 6,\,5 \\
\cline{2-3}
\textbf{3} & 4,\,5 & 3,\,2 \\
\cline{2-3}
\end{tabular}
\end{center}
%
\noindent
while the best response correspondences $B_i^\sharp$ become: 
\[
\begin{array}{llllll}
B_1^\sharp(4)=\{5\},& B_1^\sharp(6)=\{5,6\},& B_2^\sharp(3)=\{4\}; &
B_2^\sharp(5)=\{4\},& B_2^\sharp (6)=\{4\}.& 
\end{array}
\]
Thus, here we have that $\Eq(\Gamma^\sharp)= \{(5,4)\}$. 
In this case, it turns out that $B_i^\sharp$ is a $\EM$-correct approximation of $B_i$, so that, by Corollary~\ref{coro-main}~(2),
$\Eq(\Gamma) = \Fix(B)=\{(2,3),(5,4)\} \preceq_{\EM} \{(5,4)\}=\Fix(B^\sharp)=\Eq(\Gamma^\sharp)$ holds.
\qed
\end{example}

\begin{example}\label{ex5}\rm
Here, we consider the disjunctive abstractions $A_1 =\{4,5,6\}\in \Abs(S_1)$ and 
$A_2=\{3,4,5,6\}\in \Abs(S_2)$.   
In this case, we have the following supermodular abstract game $\Gamma^\sharp$ over $A_1\times A_2$:

\begin{center}
\renewcommand{\arraystretch}{1.2}
\begin{tabular}{c|C{25pt}|C{25pt}|C{25pt}|C{25pt}|}
\multicolumn{1}{c}{}
& \multicolumn{1}{c}{\textbf{3}} &  \multicolumn{1}{c}{\textbf{4}} &  \multicolumn{1}{c}{\textbf{5}} &  \multicolumn{1}{c}{\textbf{6}}\\
\cline{2-5}
\textbf{6} &2,\,4 &5,\,6 &6,\,5 &6,\,5\\
\cline{2-5}
\textbf{5} & 3,\,4 & 6,\,6 & 7,\,5 &6,\,5 \\
\cline{2-5}
\textbf{4} & 3,\,5 & 5,\,6 & 5,\,5 &4,\,4\\
\cline{2-5}
\end{tabular}
\end{center}

\medskip
\noindent
where the best response functions $B_i^\sharp$ are therefore as follows:
\[
\begin{array}{lllll}
B_1^\sharp(3)=\{4,5\},& B_1^\sharp(4)=\{5\},& B_1^\sharp(5)=\{5\},& B_1^\sharp(6)=\{5,6\}; \\
B_2^\sharp(4)=\{4\},& B_2^\sharp(5)=\{4\},& B_2^\sharp (6)=\{4\}.& 
\end{array}
\]
Here, it turns out that $B_i^\sharp$ is a $\EM$-correct approximation of $B_i$, so that 
the abstract best response $B^\sharp:A_1\times A_2 \ra \SL(A_1\times A_2)$ is a $\EM$-correct approximation of $B$. 
Then, by Corollary~\ref{coro-main}~(2), we have that
$\Eq(\Gamma)= \Fix(B)=\{(2,3),(5,4)\} \preceq_{\EM} \{(5,4)\}=\Fix(B^\sharp)=\Eq(\Gamma^\sharp)$.
\qed
\end{example}

Thus, for the concrete supermodular game $\Gamma$ of Example~\ref{ex1}, 
while the abstract games of Examples~\ref{ex4} and~\ref{ex5} can be viewed as correct approximations of $\Gamma$, this 
instead does not hold for the abstract game in Example~\ref{ex3}. The following results provide conditions 
that justify these different behaviors.

\begin{theorem}[\textbf{Correctness of Games with Abstract Strategy Spaces}]\label{theo-con}
Let $\cG=(\alpha_i,S_i,A_i,\gamma_i)_{i=1}^n$ be a family of finitely disjunctive GIs, $S=\times_{i=1}^n S_i$, $A=\times_{i=1}^n A_i$ 
and $(\alpha,S,A,\gamma)$ be
the nonrelational product composition of $\cG$. Let $\Gamma=\langle S_i,u_i\rangle_{i=1}^n$ be a (quasi)supermodular game, with best response $B$, 
and $\Gamma^\cG=\langle A_i, u_i^\cG\rangle_{i=1}^n$ be the corresponding abstract
(quasi)supermodular game, with best response $B^\cG$. 
Assume that for any $a\in A$, 
$\bigvee_S B(\gamma(a)) \vee_S \gamma(\bigwedge_A B^\cG(a)) \in \gamma(A)$. Then,
$\Eq(\Gamma) \preceq_{\EM} \gamma^s(\Eq(\Gamma^\cG))$ and, in particular, 
$\lne(\Gamma) \leq \gamma^s(\lne(\Gamma^\cG))$ and $\gne(\Gamma) \leq \gamma^s(\gne(\Gamma^\cG))$.
\end{theorem}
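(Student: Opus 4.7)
The plan is to show that $B^\cG$ is an $\EM$-correct approximation of $B$ in the sense of Section~\ref{concr} and then invoke Corollary~\ref{coro-main}(2). The fixed point condition is immediate: Corollary~\ref{coro-sm2} ensures that $\Gamma^\cG$ is itself (quasi)supermodular (this is where finite disjunctivity of $\cG$ is used), so $B^\cG:A\ra \SL(A)\subseteq \wp^\diamond(A)$ is $\EM$-monotone. Since $\gamma$ preserves meets and $B^\cG(a)\in \SL(A)$ contains both $\wedge_A B^\cG(a)$ and $\vee_A B^\cG(a)$, a routine check yields $\wedge_S\gamma^s(B^\cG(a))=\gamma(\wedge_A B^\cG(a))$ and $\vee_S\gamma^s(B^\cG(a))=\gamma(\vee_A B^\cG(a))$, so the soundness condition $B(\gamma(a))\pem \gamma^s(B^\cG(a))$ reduces to the Smyth inequality $\wedge_S B(\gamma(a))\leq \gamma(\wedge_A B^\cG(a))$ and the Hoare inequality $\vee_S B(\gamma(a))\leq \gamma(\vee_A B^\cG(a))$. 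Because $(\alpha,S,A,\gamma)$ is a nonrelational product and best responses factor as $B(s)=\times_i B_i(s_{-i})$ and $B^\cG(a)=\times_i B_i^\cG(a_{-i})$, these inequalities split componentwise. Fixing a player $i$ and $a\in A$, I write $s_{-i}\ud \gamma_{-i}(a_{-i})$, $F\ud B_i(s_{-i})$, $G\ud \gamma_i(B_i^\cG(a_{-i}))$, $x_*\ud\wedge F$, $x^*\ud\vee F$, $\bar{y}\ud \wedge G$, $z\ud x^*\vee_i \bar{y}$; the theorem's hypothesis then reads $z\in \gamma_i(A_i)$.

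For the Hoare inequality I would instantiate quasisupermodularity of $u_i(\cdot,s_{-i})$ at $x^*$ and $\bar{y}$: the precondition $u_i(x^*\wedge \bar{y},s_{-i})\leq u_i(x^*,s_{-i})$ is automatic since $x^*$ maximizes $u_i(\cdot,s_{-i})$ on $S_i$, so $u_i(\bar{y},s_{-i})\leq u_i(z,s_{-i})$; conversely $z\in \gamma_i(A_i)$ and $\bar{y}$ maximizes $u_i(\cdot,s_{-i})$ on $\gamma_i(A_i)$, so $u_i(z,s_{-i})\leq u_i(\bar{y},s_{-i})$. Hence $z\in G$, which yields $x^*\leq z\leq \vee G$.

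The Smyth inequality is the heart of the proof and I expect it to be the main obstacle, since the hypothesis controls only the ``large-end'' join $x^*\vee \bar{y}$ while Smyth concerns the ``small end'' $x_*$. I would argue by contradiction. If $x_*\not\leq \bar{y}$ then $x_*\wedge \bar{y}<x_*$ strictly, and because $F$ is a subcomplete sublattice with $\wedge F=x_*$ the element $x_*\wedge \bar{y}$ cannot belong to $F$, forcing $u_i(x_*\wedge \bar{y},s_{-i})<u_i(x_*,s_{-i})$. The strict half of quasisupermodularity then gives $u_i(\bar{y},s_{-i})<u_i(w,s_{-i})$ with $w\ud x_*\vee \bar{y}$. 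To close the loop I reapply quasisupermodularity at $x^*$ and $w$: the precondition $u_i(x^*\wedge w,s_{-i})\leq u_i(x^*,s_{-i})$ holds for the same reason as before, and $x^*\vee w=x^*\vee x_*\vee \bar{y}=x^*\vee \bar{y}=z$, so $u_i(w,s_{-i})\leq u_i(z,s_{-i})$. Combined with the Hoare-side identity $u_i(z,s_{-i})=u_i(\bar{y},s_{-i})$, this forces $u_i(w,s_{-i})\leq u_i(\bar{y},s_{-i})$, contradicting the strict inequality just derived. Hence $x_*\leq \bar{y}$ and Smyth holds. Once $\EM$-correctness is in place, Corollary~\ref{coro-main}(2) delivers $\Eq(\Gamma)=\Fix(B)\pem \gamma^s(\Fix(B^\cG))=\gamma^s(\Eq(\Gamma^\cG))$; the ``in particular'' inequalities follow because, by Corollary~\ref{coro4}(3), $\Eq(\Gamma^\cG)$ is a complete lattice whose least and greatest elements $\lne(\Gamma^\cG)$ and $\gne(\Gamma^\cG)$ belong to it, so $\gamma$ sends them to the meet and join of $\gamma^s(\Eq(\Gamma^\cG))$ in $S$.
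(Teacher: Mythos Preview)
Your proposal is correct and follows the same overall route as the paper: reduce to showing $B(\gamma(a))\pem\gamma^s(B^\cG(a))$ componentwise and then invoke Corollary~\ref{coro-main}(2). The Hoare half is identical to the paper's (both show $z=x^*\vee\bar{y}\in G$). The Smyth half differs slightly: the paper uses the additive form of supermodularity to sandwich inequalities into equalities and concludes directly that $x^*\wedge\bar{y}\in F$, whence any $t\in G$ satisfies $t\geq\bar{y}\geq x^*\wedge\bar{y}\in F$; you instead argue by contradiction, chaining two applications of quasisupermodularity (one strict, at $x_*,\bar{y}$; one non-strict, at $x^*,w$) to obtain the stronger fact $x_*\leq\bar{y}$. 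Note that the paper's conclusion $x^*\wedge\bar{y}\in F$ actually implies yours, since then $x_*=\wedge F\leq x^*\wedge\bar{y}\leq\bar{y}$. Your variant has the advantage of making the quasisupermodular case fully explicit, whereas the paper only writes out the supermodular case and declares the quasisupermodular case ``analogous''; the paper's argument is a touch shorter in the supermodular case since it avoids the contradiction and the second application.
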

\begin{proof}
We have that $\Eq(\Gamma)=\Fix(B)$ and
$\Eq(\Gamma^\cG)=\Fix(B^\cG)$, where $B:S\ra \wp^\diamond(S)$ and $B^\cG: A \ra \wp^\diamond(A)$ are $\EM$-monotone. Thus,
by Corollary~\ref{coro-main}~(2), in order to prove that $\Eq(\Gamma) \preceq_{\EM} \gamma^s(\Eq(\Gamma^\cG))$ it is enough to prove
that for any $a\in A$, $B(\gamma(a)) \preceq_{\EM} \gamma^s(B^\cG(a))$. Let $h\ud \bigvee_S B(\gamma(a))\in S$, so that $h\in B(\gamma(a))$, 
and $k\ud \bigwedge_A B^\cG(a)\in A$, so that, by Corollary~\ref{coro-sm2}, $k\in B^\cG(a)$. By hypothesis, we have that $h\vee_S \gamma(k) \in \gamma(A)$. 
Let us consider some $i\in [1,n]$. Therefore,   
$h_i \vee_i \gamma_i(k_i) \in \gamma_i(A_i)$, that is, $h_i \vee_i \gamma_i(k_i) = \gamma_i (b_i)$, for some $b_i\in A_i$. 
Hence, since $k_i \in B^\cG_i (a_{-i})$, we have that 
$$u_i(h_i \vee_i \gamma_i(k_i), \gamma_{-i}(a_{-i})) = u_i(\gamma_i(b_i), \gamma_{-i}(a_{-i})) = u_i^\cG(b_i, a_{-i}) \leq u_i^\cG(k_i, a_{-i}) = 
u_i(\gamma_i(k_i), \gamma_{-i}(a_{-i})).$$
On the other hand, since $h_i \in B_i(\gamma(a)_{-i}) = B_i(\gamma_{-i}(a_{-i}))$, we have that
$u_i(h_i \wedge_i \gamma_i(k_i), \gamma_{-i}(a_{-i})) \leq u_i (h_i, \gamma_{-i}(a_{-i}))$.
Furthermore, since $u_i$ is supermodular, we also have that 
$$u_i(h_i \wedge_i \gamma_i(k_i), \gamma_{-i}(a_{-i})) + u_i(h_i \vee_i \gamma_i(k_i), \gamma_{-i}(a_{-i})) \geq 
u_i(h_i,\gamma_{-i}(a_{-i})) + u_i(\gamma_i(k_i), \gamma_{-i}(a_{-i})).$$
We therefore obtain:
\begin{align*}
u_i(h_i,\gamma_{-i}(a_{-i})) + u_i(\gamma_i(k_i), \gamma_{-i}(a_{-i})) &\geq
u_i(h_i \wedge_i \gamma_i(k_i), \gamma_{-i}(a_{-i})) + u_i(h_i \vee_i \gamma_i(k_i), \gamma_{-i}(a_{-i}))\\
&\geq 
u_i(h_i,\gamma_{-i}(a_{-i})) + u_i(\gamma_i(k_i), \gamma_{-i}(a_{-i}))
\end{align*}
so that $$u_i(h_i,\gamma_{-i}(a_{-i})) + u_i(\gamma_i(k_i), \gamma_{-i}(a_{-i})) =
u_i(h_i \wedge_i \gamma_i(k_i), \gamma_{-i}(a_{-i})) + u_i(h_i \vee_i \gamma_i(k_i), \gamma_{-i}(a_{-i}))$$
and, in turn, $u_i(h_i \wedge_i \gamma_i(k_i), \gamma_{-i}(a_{-i})) = u_i(h_i,\gamma_{-i}(a_{-i}))$ and 
$u_i^\cG(b_i, a_{-i}) = u_i(h_i \vee_i \gamma_i(k_i), \gamma_{-i}(a_{-i})) = u_i(\gamma_i(k_i), \gamma_{-i}(a_{-i}))=u_i^\cG(k_i, a_{-i})$. Thus,  
$h_i \wedge_i \gamma_i(k_i) \in B_i (\gamma_{-i}(a_{-i}))$  and $h_i \vee_i \gamma_i(k_i)\in \gamma_i(B_i^\cG(a_{-i}))$. Therefore,
it turns out that $h \wedge \gamma(k) \in B(\gamma(a))$ and $h \vee \gamma(k) \in \gamma^s (B^\cG(a))$.  
Hence, if $s\in B(\gamma(a))$ then $s \leq h \leq h \vee \gamma(k) \in \gamma^s (B^\cG(a))$, while if $t\in \gamma^s (B^\cG(a))$
then $t=\gamma^s(d)$, for some $d\in B^\cG(a)$, so that $k\leq_A d$ and, in turn, 
$t=\gamma^s(d) \geq \gamma(k) \geq  h \wedge \gamma(k) \in B(\gamma(a))$,
thus showing that $B(\gamma(a)) \preceq_{\EM} \gamma^s(B^\cG(a))$. The proof for quasisupermodular games is analogous. 
\end{proof}

As a consequence of the above result, we obtain a generalization of \cite[Lemma~4]{ech2007}, 
which is the basis for designing the efficient algorithm in \cite[Section~4]{ech2007} that computes
all the Nash equilibria in a finite game with strategic complementarities. 
A GC $(\alpha,C,A,\gamma)$ is called a \emph{principal filter GC} when the image 
$\gamma(A)$ is the principal filter at $\gamma(\bot_A)$, that is, 
$\gamma(A)= \{c\in C~|~ \gamma(\bot_A) \leq c\}$. 

\begin{corollary}\label{coro-eche}
Let $\cG=(\alpha_i,S_i,A_i,\gamma_i)_{i=1}^n$ be principal filter GCs. Then,  $\Eq(\Gamma) \preceq_{\EM} \gamma^s(\Eq(\Gamma^\cG))$. 
\end{corollary}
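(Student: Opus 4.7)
The plan is to derive this as a direct specialization of Theorem~\ref{theo-con}: I only need to check that the two hypotheses of that theorem — (a)~finite disjunctivity of each $(\alpha_i,S_i,A_i,\gamma_i)$, and (b)~the absorption condition $\bigvee_S B(\gamma(a)) \vee_S \gamma(\bigwedge_A B^\cG(a)) \in \gamma(A)$ for every $a\in A$ — are automatic for principal filter GCs.

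First, I would verify finite disjunctivity. Recalling the general fact (point~(6) of the properties of GCs collected at the end of Section~\ref{ai-sec}) that a GC $(\alpha_i,S_i,A_i,\gamma_i)$ is finitely disjunctive iff $\gamma_i(A_i)$ is finitely meet- and join-closed in $S_i$, the claim is immediate: a principal filter $F_i = \{c\in S_i \mid \gamma_i(\bot_{A_i})\leq_i c\}$ is trivially closed under arbitrary nonempty meets and joins, because if $x,y\geq_i \gamma_i(\bot_{A_i})$ then $x\wedge_i y$ and $x\vee_i y$ are both $\geq_i \gamma_i(\bot_{A_i})$. Hence each $\cG_i$ is (even arbitrarily) disjunctive, so Lemma~\ref{prod2} applies and yields a finitely disjunctive product GC $(\alpha,S,A,\gamma)$ with image $\gamma(A) = \times_{i=1}^n \gamma_i(A_i)$, which is itself the principal filter of $S$ at $\gamma(\bot_A)=(\gamma_i(\bot_{A_i}))_{i=1}^n$.

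Next, I would verify the absorption condition. Fix $a\in A$, set $h \ud \bigvee_S B(\gamma(a)) \in S$ and $k\ud \bigwedge_A B^\cG(a) \in A$. Since $\gamma(k) \in \gamma(A)$, we have $\gamma(k) \geq_S \gamma(\bot_A)$; therefore, taking componentwise joins, $(h \vee_S \gamma(k))_i = h_i \vee_i \gamma_i(k_i) \geq_i \gamma_i(k_i) \geq_i \gamma_i(\bot_{A_i})$, so $h\vee_S \gamma(k)$ lies in the principal filter $\gamma(A)$. This is the only place where the principal filter assumption is actually used, and it reduces to the trivial monotonicity remark that joining anything above the bottom of the filter with an arbitrary element stays above the bottom of the filter.

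Having checked both hypotheses, I would simply invoke Theorem~\ref{theo-con} to conclude $\Eq(\Gamma) \preceq_{\EM} \gamma^s(\Eq(\Gamma^\cG))$. I do not expect any real obstacle here: the entire content of the corollary is the observation that the global absorption hypothesis of Theorem~\ref{theo-con}, which in general must be assumed pointwise, is automatic when $\gamma(A)$ is upward closed in $S$. The slight subtlety to watch is that the principal filter property must be preserved under the componentwise product, but this follows from Lemma~\ref{prod2} together with the elementary fact that a product of principal filters (with respect to componentwise order) is itself a principal filter.
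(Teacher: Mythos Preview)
Your proposal is correct and follows essentially the same route as the paper: observe that the product of principal filter GCs is itself a principal filter GC, then note that the absorption hypothesis of Theorem~\ref{theo-con} is automatic because $h \vee_S \gamma(k) \geq_S \gamma(k) \geq_S \gamma(\bot_A)$ forces the join into the filter $\gamma(A)$. Your explicit verification of finite disjunctivity is a detail the paper leaves tacit, but otherwise the two arguments coincide.
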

\begin{proof}
Observe that the product $(\alpha,\times_{i=1}^n S_i ,\times_{i=1}^n A_i,\gamma)$ is a 
principal filter GC. Then, this is
a straight consequence of Theorem~\ref{theo-con}, since 
$\bigvee_S B(\gamma(a)) \vee_S \gamma(\bigwedge_A B^\cG(a)) \geq \gamma(\bigwedge_A B^\cG(a))
\geq \gamma((\bot_{A_i})_{i=1}^n)$, so that  $\bigvee_S B(\gamma(a)) \vee_S \gamma(\bigwedge_A B^\cG(a))\in \gamma(A)$ holds. 
\end{proof}

\begin{example}\rm 
Let us consider the following finite supermodular game $\Delta$ taken from \cite[Example~8.11]{ch11}, which is an example
of the well known Bertrand oligopoly model~\cite{topkis98}. Players $i\in \{1,2,3\}$ stand for firms which sell substitute products $p_i$
(e.g., a can of beer), whose feasible selling prices (e.g., in euros) $s_i$ range in $S_i\ud [a,b]$, where the smallest price shift is 5 cents.
The payoff function $u_i:S_1\times S_2 \times S_3 \ra \bR$ models the profit of firm $i$:
$$u_i(s_1,s_2,s_3) \ud d_i(s_1,s_2,s_3)(s_i-c_i) $$
where $d_i(s_1,s_2,s_3)$ gives the demand of $p_i$, i.e., how many units of $p_i$ the firm $i$ sells in a given time frame, while $c_i$ is the unit cost
of $p_i$ so that $(s_i-c_i)$ is the profit per unit. Following \cite[Example~8.11]{ch11}, let us assume that:
\begin{align*}
u_1(s_1,s_2,s_3) &= (370+213(s_2+s_3) + 60s_1 -230s_1^2)(s_1 - 1.10)\\
u_2(s_1,s_2,s_3) &= (360+233(s_1+s_3) + 55s_2 -220s_2^2)(s_2 - 1.20)\\
u_3(s_1,s_2,s_3) &= (375+226(s_1+s_2) + 50s_3 -200s_3^2)(s_3 - 1.25)
\end{align*}
As shown in general in \cite[Corollary~8.9]{ch11}, it turns out that each payoff function $u_i$ has increasing differences
and $u_i(s_i,\cdot)$ is monotone, so that the game $\Delta$ has the least and greatest price equilibria $\lne(\Delta)$ and $\gne(\Delta)$, 
and $\gne(\Delta)$ ($\lne(\Delta)$) provides the best (least) profits among all equilibria. It should be noted that 
\cite[Example~8.11]{ch11} considers as payoff functions the integer part of $u_i$, namely, $\lfloor u_i(s_1,s_2,s_3)\rfloor$, however we notice
that that this definition of payoff function does not have increasing differences, so that \cite[Corollary~8.9]{ch11}, which assumes
the hypothesis of increasing differences, 
cannot be applied: for example, \cite[Example~8.11]{ch11} considers $S_i = \{ x/20~|~x\in [26,42]_{\mathbb{Z}}\}$ and
with $(1.3,1.3,1.8) \leq (1.35,1.3,1.85)$, we would have that
\begin{align*}
\lfloor u_1(1.35,1.3,1.8)\rfloor - \lfloor u_1(1.3,1.3,1.8)\rfloor & = \lfloor 173.03125 \rfloor - \lfloor 143.92\rfloor = 30 >\\
\lfloor u_1(1.35,1.3,1.85)\rfloor - \lfloor u_1(1.3,1.3,1.85) \rfloor&= \lfloor 175.69375 \rfloor - \lfloor 146.05\rfloor = 29
\end{align*}

\noindent
Instead, we consider here $S_i \ud \{ x/20~|~x\in [20,46]_{\mathbb{Z}}\}$, namely the feasible prices range
from 1 to 2.3 euros with 0.05  shift.  Using the standard RT algorithm in Figure~\ref{fig-algo} 
(we made a simple C++ implementation of RT), one obtains:
$$\lne(\Delta)=(1.80,1.90,1.95)=\gne(\Delta)$$
namely, $\Delta$ admits a unique Nash equilibrium. It turns out that the algorithm
RT calculates $\lne(\Delta)$ starting from the bottom $(1.0,1.0,1.0)$ through 12 calls to 
$\bigwedge B_i(s_{-i})$, while it may output the same equilibrium as 
$\gne(\Delta)$ beginning from the top
$(2.3,2.3,2.3)$ through 9 calls to $\bigvee B_i(s_{-i})$. 

\noindent
Let us consider the following abstractions $A_i\in \Abs(S_i)$:
\[
A_1 \ud \{ x/20~|~x\in [35,38]_{\mathbb{Z}} \cup [42,46]_{\mathbb{Z}}\},\quad
A_2 \ud \{ x/20~|~x\in [36,46]_{\mathbb{Z}}\}, \quad
A_3 \ud \{ x/20~|~x\in [38,46]_{\mathbb{Z}} \}.
\]
Notice that $A_2$ and $A_3$ are principal filter abstractions, while this is not the case for $A_1$, so that
Corollary~\ref{coro-eche} cannot be applied. 
We observe that: 
\begin{align*}
\{\textstyle{\bigvee_1} B_1(a_{-1})\in S_1 ~|~a_{-1}\in A_2\times A_3\}&= \{ 36/20, 37/20, 38/20\}, \\
\{\textstyle{\bigvee_2} B_2(a_{-2})\in S_2 ~|~a_{-2}\in A_1\times A_3\}&=\{ 38/20, 39/20, 40/20\}, \\
\{\textstyle{\bigvee_3} B_3(a_{-3})\in S_3 ~|~a_{-3}\in A_1\times A_2\}&=\{ 39/20, 40/20, 41/20, 42/20\}.
\end{align*}
The hypothesis of Theorem~\ref{theo-con} is therefore satisfied, because
for any $a_{-i}\in A_{-i}$, we have that $\bigvee B_i(a_{-i}) \in A_i$. Hence, by Corollary~\ref{coro-sm2},
we consider the supermodular abstract game $\Delta^A$ on the abstract strategy spaces $A_i$. 
By exploiting the standard RT algorithm in Figure~\ref{fig-algo} for $\Delta^A$, we still obtain a unique equilibrium $\lne(\Delta^A)=(1.80,1.90,1.95)=\gne(\Delta^A)$,
so that in this case no approximation of equilibria occurs. Here,  
RT calculates $\lne(\Delta^A)$ starting from the bottom $(1.8,1.8,1.9)$ of $A_1\times A_2 \times A_3$ through 6 calls to 
$\bigwedge B_i^A(a_{-i})$ and any call $\bigwedge B_i^A(a_{-i})$ scans the smaller abstract strategy space $A_i$ instead of $S_i$. On the other hand,  
$(1.80,1.90,1.95)=\gne(\Delta)$ can be also calculated from the top
$(2.3,2.3,2.3)$ still with  9 calls to $\bigvee B_i^A(a_{-i})$, each scanning the reduced abstract strategy spaces $A_i$. 
\qed
\end{example}

\section{Games with Abstract Best Response}
In the following, we put forward a notion of abstract game where the strategy spaces are subject to a form of
partial approximation by abstract interpretation, meaning that we consider approximations of the strategy spaces
of the ``other players'' for any utility function, i.e., correct approximations of the functions $u_i(s_i, \cdot)$, for any given $s_i$. 
This approach gives rise to games having an abstract best response correspondence.  
Here, we aim at providing a systematic abstraction framework for the implicit methodology of approximate computation of 
equilibria 
considered by Carl and Heikkil{\"{a}} \cite{ch11} in their Examples~8.58, 8.63 and~8.64.

Given a game $\Gamma = \langle S_i,u_i\rangle_{i=1}^n$, we consider a  
family $\cG = (\alpha_i,S_i ,A_i,\gamma_i)_{i=1}^n$ of GCs and,
by Lemma~\ref{prod2}, their nonrelational product $(\alpha,\times_{i=1}^n S_i,\times_{i=1}^n A_i,\gamma)$, where
we denote by $\rho\ud \gamma\circ \alpha\in \uco(\times_{i=1}^n S_i)$ the corresponding closure operator
and, for any $i$, by $\rho_{-i}\in \uco(S_{-i})$ the closure operator corresponding to the $(n-i)$-th nonrelational product
$(\alpha_{-i},\times_{j\neq i} S_j,\times_{j\neq i} A_j,\gamma_{-i})$.
 The utility function 
$u_{i,\cG} :\times_{i=1}^n S_i \ra \bR$ is then defined as follows:
for any $s\in \times_{i=1}^n S_i$, 
$u_{i,\cG}(s_i,s_{-i})\ud u_i(s_i,\rho_{-i}(s_{-i}))$.    

\begin{lemma}\label{lemma-sm} \ \ 
If $u_i(s_i, s_{-i})$ has increasing differences (the single crossing property) then 
$u_{i,\cG}(s_i, s_{-i})$ has increasing differences (the single crossing property). 
Also, if $u_i(s_i, \cdot)$ is monotone then $u_{i,\cG}(s_i, \cdot)$ is monotone. 
\end{lemma}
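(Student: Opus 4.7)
The plan is to reduce every assertion to the corresponding property of the original $u_i$, with the closure operator $\rho_{-i}$ serving only as a monotone reparameterization of the second argument. The key fact I will invoke is that, being a closure operator, $\rho_{-i}\in\uco(S_{-i})$ is monotone on $S_{-i}$; this is immediate from the nonrelational composition of the GCs in $\cG$ (each $\gamma_j\circ\alpha_j$ is a closure operator, and their product is a closure operator on $\times_{j\neq i} S_j$).

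First I would prove the monotonicity clause, which is a one-liner: if $s_{-i}\leq s'_{-i}$ then by monotonicity of $\rho_{-i}$ we have $\rho_{-i}(s_{-i})\leq \rho_{-i}(s'_{-i})$, so the monotonicity of $u_i(s_i,\cdot)$ yields $u_{i,\cG}(s_i,s_{-i})=u_i(s_i,\rho_{-i}(s_{-i}))\leq u_i(s_i,\rho_{-i}(s'_{-i}))=u_{i,\cG}(s_i,s'_{-i})$.

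Next, for increasing differences, fix any $(s_i,s_{-i})\leq(s'_i,s'_{-i})$. By monotonicity of $\rho_{-i}$ one still has $(s_i,\rho_{-i}(s_{-i}))\leq (s'_i,\rho_{-i}(s'_{-i}))$, so applying the hypothesis of increasing differences of $u_i$ to the pairs $(s_i,\rho_{-i}(s_{-i}))$ and $(s'_i,\rho_{-i}(s'_{-i}))$ gives
\[
u_i(s'_i,\rho_{-i}(s_{-i}))-u_i(s_i,\rho_{-i}(s_{-i}))\leq u_i(s'_i,\rho_{-i}(s'_{-i}))-u_i(s_i,\rho_{-i}(s'_{-i})),
\]
which is exactly the desired inequality for $u_{i,\cG}$ once the definition is unfolded. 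The single crossing property is handled identically: if $u_{i,\cG}(s_i,s_{-i})\leq u_{i,\cG}(s'_i,s_{-i})$, i.e., $u_i(s_i,\rho_{-i}(s_{-i}))\leq u_i(s'_i,\rho_{-i}(s_{-i}))$, then the single crossing property of $u_i$ applied along $(s_i,\rho_{-i}(s_{-i}))\leq(s'_i,\rho_{-i}(s'_{-i}))$ yields $u_i(s_i,\rho_{-i}(s'_{-i}))\leq u_i(s'_i,\rho_{-i}(s'_{-i}))$, i.e., $u_{i,\cG}(s_i,s'_{-i})\leq u_{i,\cG}(s'_i,s'_{-i})$; the strict version is analogous.

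I do not foresee any real obstacle: the lemma is essentially a functorial transport statement whose only substantive ingredient is that $\rho_{-i}$ is monotone, a property built into any closure operator. The mildly subtle point to flag is that I do not need any disjunctivity assumption on the GCs here (contrast with Lemma~\ref{lemma-sm2}): because the variable $s_i$ is not being abstracted by $\rho_{-i}$, no interaction between $\rho_{-i}$ and the join/meet in the first coordinate arises, and the increasing-differences/single-crossing inequalities survive intact under the change of variable $s_{-i}\mapsto \rho_{-i}(s_{-i})$.
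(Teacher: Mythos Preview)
Your proposal is correct and follows essentially the same argument as the paper: both proofs use only the monotonicity of the closure operator $\rho_{-i}$ to transport the hypotheses on $u_i$ to $u_{i,\cG}$ by the change of variable $s_{-i}\mapsto\rho_{-i}(s_{-i})$. Your additional remark that no disjunctivity is required here (in contrast with Lemma~\ref{lemma-sm2}) is correct and worth keeping.
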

\begin{proof}
Assume that $(s_i,s_{-i}) \leq (t_i,t_{-i})$. Hence, $s_{-i}\leq_{-i} t_{-i}$, so that, 
by monotonicity of $\rho_{-i}$, $\rho_{-i}(s_{-i}) \leq_{-i} \rho_{-i}(t_{-i})$, and, in turn, 
$(s_i,\rho_{-i}(s_{-i})) \leq (t_i,\rho_{-i}(t_{-i}))$.
Then:
\begin{align*}
u_{i,\cG}(t_i, s_{-i}) - u_{i,\cG}(s_i, s_{-i}) & = \quad\text{[by definition]}\\ 
u_i(t_i, \rho_{-i}(s_{-i})) - u_i(s_i,\rho_{-i}(s_{-i})) &\leq \quad\text{[since $u_i$ has increasing differences]}\\
u_i(t_i, \rho_{-i}(t_{-i})) - u_i(s_i, \rho_{-i}(t_{-i}))& = \quad\text{[by definition]}\\ 
u_{i,\cG}(t_i, t_{-i}) - u_{i,\cG}(s_i, t_{-i}).&
\end{align*}
The single crossing property for $u_{i,\cG}(s_i, s_{-i})$ can be proved similarly. Let $s_{-i} \leq_{-i} t_{-i}$, so that, 
by monotonicity of $\rho_{-i}$,
$\rho_{-i}(s_{-i}) \leq_{-i} \rho_{-i}(t_{-i})$. 
Then, by monotonicity of $u_i(s_i, \cdot)$, we obtain:
$u_{i,\cG}(s_i, s_{-i}) = u_i(s_i, \rho_{-i}(s_{-i})) = u_i(s_i, \rho_{-i}(t_{-i})) = u_{i,\cG}(s_i, t_{-i})$, 
thus proving the monotonicity of $u_{i,\cG}(s_i, \cdot)$.
\end{proof}

Moreover, let us point out that if $u_i(\cdot, s_{-i})$ is (quasi)supermodular then, obviously,  $u_{i,\cG}(\cdot, s_{-i})$ remains
(quasi)super\-modular as well, so that by defining the game $\Gamma_\cG \ud \langle S_i,u_{i,\cG}\rangle_{i=1}^n$
we obtain the following consequence. 

\begin{corollary}\label{coro-sm3}
If $\Gamma$ is (quasi)supermodular 
then $\Gamma_\cG$ is (quasi)supermodular.
\end{corollary}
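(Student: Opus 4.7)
The plan is to unpack the two defining conditions of a (quasi)supermodular game (from Section~\ref{qsg}) for $\Gamma_\cG$ and verify each one separately, leveraging both the hypothesis on $\Gamma$ and Lemma~\ref{lemma-sm} (which is already stated and proved immediately above).

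First I would verify condition~(1), i.e., that for every player $i$ and every fixed $s_{-i}\in S_{-i}$ the function $u_{i,\cG}(\cdot,s_{-i}):S_i\ra \bR^{N_i}$ is (quasi)supermodular. The key observation here is purely definitional: since $\rho_{-i}(s_{-i})$ is an element of $S_{-i}$, we have $u_{i,\cG}(\cdot,s_{-i}) = u_i(\cdot,\rho_{-i}(s_{-i}))$, and this is (quasi)supermodular exactly because $\Gamma$ is (quasi)supermodular, applied to the parameter $\rho_{-i}(s_{-i})\in S_{-i}$. This step is essentially a remark, as the statement already hints ("obviously").

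Next I would verify condition~(2), i.e., that $u_{i,\cG}:S_i\times S_{-i}\ra \bR^{N_i}$ has increasing differences in the supermodular case, and the single crossing property in the quasisupermodular case. Both of these are exactly the content of Lemma~\ref{lemma-sm} applied to $u_i$, so this step reduces to invoking that lemma.

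Combining the two, both conditions defining a (quasi)supermodular game are satisfied by $\Gamma_\cG$. There is no real obstacle here; the corollary is essentially bookkeeping that repackages Lemma~\ref{lemma-sm} together with the trivial observation on condition~(1). The only subtlety worth flagging in the write-up is that, unlike in Corollary~\ref{coro-sm2}, we do not need the GCs in $\cG$ to be finitely disjunctive, because the abstraction here acts only through the closure $\rho_{-i}$ inside the second argument of $u_i$, which does not interact with the lattice operations on $S_i$ used in (quasi)supermodularity.
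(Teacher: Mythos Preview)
Your proposal is correct and matches the paper's own argument exactly: the paper remarks just before the corollary that (quasi)supermodularity of $u_{i,\cG}(\cdot,s_{-i})$ is obvious because it equals $u_i(\cdot,\rho_{-i}(s_{-i}))$, and then states the corollary as a consequence of this together with Lemma~\ref{lemma-sm}. Your additional comment on why finite disjunctivity of the GCs is unnecessary here is a helpful clarification that the paper does not make explicit.
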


We call $\Gamma_\cG$ a \emph{game with abstract best response} because the $i$-th best response correspondence
$B_{i,\cG}:S_{-i} \ra \SL(S_i)$ is such that $B_{i,\cG}(s_{-i}) = \{s_i\in S_i~|~ \forall x_i\in S_i. u_i(x_i,\rho_{-i}(s_{-i}))\leq 
u_i(x_i,\rho_{-i}(s_{-i}))\}=B_{i}(\rho_{-i}(s_{-i}))$, so that the best response correspondence satisfies $B_\cG(s) = B_\cG(\rho(s))=B(\rho(s))$,
namely, $B_\cG$ can be viewed as the restriction of $B$ to the abstract strategy space $\rho(S)$. 

\begin{corollary}[\textbf{Correctness of Games with Abstract Best Response}]\label{coro-abs-resp}
Let $\cG=(\alpha_i,S_i,A_i,\gamma_i)_{i=1}^n$ be a family of 
GCs. Then,  $\Eq(\Gamma) \preceq_{\EM} \Eq(\Gamma_\cG)$ and,
in particular, $\lne(\Gamma) \leq  \lne(\Gamma_\cG)$ and $\gne(\Gamma) \leq \gne(\Gamma_\cG)$.
\end{corollary}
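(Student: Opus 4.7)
The plan is to reduce the statement to a direct application of Corollary~\ref{coro4}(4) by verifying its two hypotheses for the best response correspondences $B$ and $B_\cG$: namely, that both map into $\SL(S)$ and are $\EM$-monotone, and that they are pointwise ordered, $B(s) \preceq_{\EM} B_\cG(s)$ for every $s \in S$. Once these are in place, Corollary~\ref{coro4}(4) yields $\Fix(B) \preceq_{\EM} \Fix(B_\cG)$, i.e., $\Eq(\Gamma) \preceq_{\EM} \Eq(\Gamma_\cG)$, and the extremal equilibrium inequalities then follow from the observation that $\preceq_{\EM}$ between nonempty subcomplete sublattices amounts to meet/join comparison.

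The crucial identity to exploit is the one noted just before the statement, namely $B_\cG(s) = B(\rho(s))$, where $\rho = \gamma \circ \alpha \in \uco(S)$ is the closure operator induced by the nonrelational product of $\cG$. First, I would invoke Corollary~\ref{coro-sm3} to know that $\Gamma_\cG$ is (quasi)supermodular, so by Section~\ref{qsg} both $B$ and $B_\cG$ are $\EM$-monotone and take values in $\SL(S)$ (the upper semicontinuity hypothesis carries over to $u_{i,\cG}(\cdot,s_{-i}) = u_i(\cdot,\rho_{-i}(s_{-i}))$ since $\rho_{-i}(s_{-i})$ is a fixed element of $S_{-i}$). Alternatively, $\EM$-monotonicity of $B_\cG$ can be derived directly from the identity $B_\cG = B \circ \rho$ as the composition of an $\EM$-monotone map with the monotone map $\rho$.

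The pointwise comparison $B(s) \preceq_{\EM} B_\cG(s)$ is then the heart of the argument, and the key step is the elementary fact that $\rho$ is extensive: $s \leq \rho(s)$ for every $s \in S$. Combining extensivity with $\EM$-monotonicity of $B$ yields $B(s) \preceq_{\EM} B(\rho(s)) = B_\cG(s)$, which is exactly the hypothesis of Corollary~\ref{coro4}(4). Applying that corollary gives $\Eq(\Gamma) = \Fix(B) \preceq_{\EM} \Fix(B_\cG) = \Eq(\Gamma_\cG)$, and since both fixed point sets form complete lattices (Corollary~\ref{coro4}(3)), unfolding the $\preceq_{\EM}$ on $\wp^\diamond$ into meet/join inequalities immediately gives $\lne(\Gamma) \leq \lne(\Gamma_\cG)$ and $\gne(\Gamma) \leq \gne(\Gamma_\cG)$.

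There is essentially no serious obstacle here; the proof is a three-line chain once one spots that the whole content is packaged in the extensivity of $\rho$ plus the $\EM$-monotonicity of $B$, together with the already established fixed point comparison lemma. The only point requiring mild care is the quasisupermodular case, where $\EM$-monotonicity of $B$ still holds by \cite[Theorem~2.8.6]{topkis98} as recalled in Section~\ref{qsg}, so the same argument applies verbatim without any modification.
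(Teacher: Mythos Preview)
Your proposal is correct and follows essentially the same argument as the paper: invoke Corollary~\ref{coro-sm3} so that both $B$ and $B_\cG$ are $\EM$-monotone with values in $\SL(S)$, use extensivity of the closure $\rho$ together with $\EM$-monotonicity of $B$ and the identity $B_\cG(s)=B(\rho(s))$ to obtain $B(s)\preceq_{\EM} B_\cG(s)$, and conclude via Corollary~\ref{coro4}(4). The extra remarks you make (upper semicontinuity of $u_{i,\cG}(\cdot,s_{-i})$, the alternative route to $\EM$-monotonicity of $B_\cG$ as $B\circ\rho$) are sound elaborations but not additional ideas beyond the paper's proof.
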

\begin{proof}
Since,  by Corollary~\ref{coro-sm3}, $\Gamma_\cG$ is (quasi)supermodular, we have that 
$\Eq(\Gamma)=\Fix(B)$ and
$\Eq(\Gamma_\cG)=\Fix(B_\cG)$. We have that for any $s\in \times_{i=1}^n S_i$, by extensiveness of $\rho$, $s\leq \rho(s)$, 
so that, since $B$ is monotone, we obtain $B(s) \preceq_{\EM} B(\rho(s)) =B_\cG(s)$. Hence, by Corollary~\ref{coro4}~(4), 
we obtain that $\Fix(B) \preceq_{\EM} \Fix(B_\cG)$.
\end{proof}

\begin{example}\rm 
Let us consider the two-player game $\Gamma=\langle S_i,u_i\rangle_{i=1}^2$ in \cite[Example~8.53]{ch11}, which is
a further example of Bertrand oligopoly, where 
$S_1=S_2 = [\frac{3}{2},\frac{5}{2}]\times [\frac{3}{2},\frac{5}{2}]$ and the utility functions $u_i:S_1\times S_2 \ra \bR^2$ are
defined by $u_i((s_{i1},s_{i2}),s_{-i})=(u_{i1}(s_{i1},s_{-i}),u_{i2}(s_{i2},s_{-i}))$ $\in \bR^2$ with
\begin{align*}
u_{11}(s_{11},s_{21},s_{22}) &\ud \big(52-21s_{11}+s_{21}+4s_{22}+8\sgn(s_{21}s_{22}-4 )\big)(s_{11}-1)\\
u_{12}(s_{12},s_{21},s_{22}) &\ud \big(51-21s_{12}-\sgn(s_{12}-\frac{11}{5}) +2s_{21}+3s_{22} +4\sgn(s_{21}+s_{22}-4) \big)(s_{12}-\frac{11}{10})\\
u_{21}(s_{21},s_{11},s_{12}) &\ud \big(50-20s_{21}-\sgn(s_{21}-\frac{11}{5}) +3s_{11}+2s_{12} +2\sgn(s_{11}+s_{12}-4) \big)(s_{21}-\frac{11}{10})\\
u_{22}(s_{22},s_{11},s_{12}) &\ud \big(49-20s_{22}+4s_{11}+s_{12}+\sgn(s_{11}s_{12}-4 )\big)(s_{22}-1)
\end{align*} 
Since any utility function $u_{ij}(s_{ij},s_{-i})$ does not depend on $s_{i,-j}$, let us observe that $u_i(\cdot,s_{-i}):S_i \ra \bR^2$ is supermodular. 
Moreover, by \cite[Propositions~8.56, 8.57]{ch11}, we also have that $u_i(s_1,s_2)$ has the single crossing property,
so that $\Gamma$  is indeed quasisupermodular. 
Also,
since $S_i$ is a compact (for the standard topology)
complete sublattice of $\bR^2$, we also have that $u_i(\cdot,s_{-i})$ is order upper semicontinuous, so that, for any $s\in S_1\times S_2$,
the best response correspondence $B$ satisfies $B(s)\in \SL(S_1 \times S_2)$.  
Indeed, as observed in  \cite[Example~8.53]{ch11}, 
it turns out that the utility functions $u_{ij}(\cdot,s_{-i}):[\frac{3}{2},\frac{5}{2}]\ra \bR$ have unique maximum points denoted
by $f_{ij}(s_{-i})$ which are the solutions of the equations $\frac{d}{ds}u_{ij}(s,s_{-i})=0$. An easy computation then provides:
\begin{align*}
f_{11}(s_{21},s_{22}) &\ud \frac{73}{42} +\frac{1}{42}s_{21} +\frac{2}{21}s_{22} +\frac{4}{21}\sgn(s_{21}s_{22}-4 )\\
f_{12}(s_{21},s_{22}) &\ud \frac{247}{140} +\frac{1}{42}s_{21} +\frac{1}{14}s_{22} +\frac{2}{21}\sgn(s_{21}+s_{22}-4 )\\
f_{21}(s_{11},s_{12}) &\ud \frac{9}{5} +\frac{3}{40}s_{11} +\frac{1}{20}s_{12} +\frac{1}{20}\sgn(s_{11}+s_{12}-4 )\\
f_{22}(s_{11},s_{12}) &\ud \frac{69}{40} +\frac{1}{10}s_{11} +\frac{1}{40}s_{12} +\frac{1}{40}\sgn(s_{11}s_{12}-4 )
\end{align*} 
so that the best response $B$ can be simplified as follows:
$$B(s_{11},s_{12},s_{21},s_{22}) = \big\{\big(f_{11}(s_{21},s_{22}),f_{12}(s_{21},s_{22}),f_{21}(s_{11},s_{12}),f_{22}(s_{11},s_{12})\big)\big\}.$$
As shown in \cite[Example~8.53]{ch11}, direct solutions of $\Gamma$ can be obtained by solving a linear system of four equations with four real
variables and this yields the following least and greatest equilibria: 
\begin{align*}
\lne(\Gamma)&=\Big(\frac{4940854}{2778745},\frac{5281784}{2778745},\frac{5497457}{2778745},\frac{10699993}{5557490}\Big)\\
\gne(\Gamma)&=\Big(\frac{6033654}{2778745},\frac{5848294}{2778745},\frac{5885617}{2778745},\frac{11224753}{5557490}\Big)
\end{align*}

\noindent
Carl and Heikkil{\"{a}} \cite[Example~8.58]{ch11} describe how to derive algorithmically approximate solutions of $\Gamma$ by 
approximating the fractional part of real numbers through the floor function, namely, the greatest rational number with 
$N$ fractional digits which is not more than a given real number. In this section we gave an 
abstract interpretation-based methodology for systematically designing this kind of 
approximate solutions which generalizes the approach in \cite[Example~8.58]{ch11}.
Here, we use the ceil abstraction of real numbers already described  in 
Example~\ref{ex-ceil}. Thus, we consider the closure operator $\cl_3:[\frac{3}{2},\frac{5}{2}]\ra [\frac{3}{2},\frac{5}{2}]$, 
that is, $\cl_3(x)$ is the smallest rational number with at most 3 fractional digits not less than $x$.  With a slight abuse
of notation, $\cl_3$ is also used to denote the corresponding 
componentwise function $\cl_3:[\frac{3}{2},\frac{5}{2}]^2\ra [\frac{3}{2},\frac{5}{2}]^2$, namely, 
$\cl_3(s_{i1},s_{i2}) = (\cl_3(s_{i1}),\cl_3(s_{i1}))$. 
Let $A_{\cl_3}\ud \{\frac{y}{10^3} \in \bQ ~|~ y \in [1500,2500]_{\bZ}\}=\{\cl_3(x)~|~x\in [\frac{3}{2},\frac{5}{2}]\}$ (and this
is a finite domain) and $A\ud
A_{\cl_3}\times A_{\cl_3}$. Then,
$(\cl_3,[\frac{3}{2},\frac{5}{2}],A_{\cl_3},\id)$ is a GC, so that, by Lemma~\ref{prod2}, $\cG_3 = (\cl_3, 
S_i, A, \id )_{i=1}^2$ is a pair of GCs. 
Let us denote by $\Gamma_{\cG_3}$ the corresponding game with abstract best response defined in Corollary~\ref{coro-sm3}, so that
$u_{i,\cG_3}(s_i,s_{-i}) = u_i(s_i,\cl_3(s_{-i}))$. Thus, it turns out that the abstract best response correspondence $B_{\cG_3}$ is
defined as follows: 
$$B(s_1,s_2) = \big\{\big(f_{11}(\cl_3(s_{2})),f_{12}(\cl_3(s_{2})),f_{21}(\cl_3(s_{1})),f_{22}(\cl_3(s_{1}))\big)\big\}$$
so that, $B_{\cG_3}$ can be restricted to the finite domain $A\times A$ and therefore has a finite range. This allows us to compute 
the least and greatest equilibria of $\Gamma_{\cG_3}$ by the standard RT algorithm in Figure~\ref{fig-algo}. Through a simple C++ program, we obtain the following solutions:
\begin{align*}
\lne(\Gamma_{\cG_3})&=\Big(\frac{10669}{6000},\frac{6653}{3500},\frac{79139}{40000},\frac{77017}{40000}\Big)\\
\gne(\Gamma_{\cG_3})&=\Big(\frac{91199}{42000},\frac{14733}{7000},\frac{42363}{20000},\frac{80793}{40000}\Big)
\end{align*}
By Corollary~\ref{coro-abs-resp}, we know that these are correct approximations, i.e., $\lne(\Gamma)\leq \lne(\Gamma_{\cG_3})$ and 
$\gne(\Gamma)\leq \gne(\Gamma_{\cG_3})$. 
Both fixed point calculations $\lne(\Gamma_{\cG_3})$ and $\gne(\Gamma_{\cG_3})$ 
need 16 calls to the abstract functions $f_{ij}(a_{-i})$, for some $a_{-i}\in A_{-i}$, which provide the unique
maximum points for $u_{ij}(\cdot, a_{-i})$. It is worth noting that, even with the precision of 3 fractional digits of $\cl_3$, the maximum approximation
for these abstract solutions turns out to be $\lne(\Gamma_{\cG_3})_{22}- \lne(\Gamma)_{22}=\frac{2148733}{22229960000}=0.00009665932822$.
\qed 
\end{example}

\section{Further Work}
We investigated how the abstract interpretation technique, which is widely used for static program analysis, 
can be applied to define and calculate approximate Nash equilibria
of supermodular games, thus showing how a notion of approximation of equilibria
can be modeled by an ordering relation analogously to what happens in the standard approaches to static 
analysis of the run-time behaviors of programs. To our knowledge, this is the first contribution towards
the goal of approximating solutions of supermodular games by relying on a lattice-theoretical approach. 
We see a number of interesting avenues for further work on this subject. First, our notion of correct approximation
of a multivalued function relies on a naive pointwise lifting of an abstract domain, as specified by a Galois connection, to
Smyth, Hoare, Egli-Milner and Veinott preorder relations on the powerset, which is the range of best response 
correspondences in supermodular games. It is worth investigating whether 
abstract domains can be lifted in different and more sophisticated ways to this class of preordered powersets, in particular
by taking into account that, for a certain class of complete lattices, the Veinott ordering gives rise to complete lattices~\cite{ran15}.
Secondly, it could be interesting to investigate some further conditions which can guarantee the correctness of games with abstract
strategy spaces (cf.\ Theorem~\ref{theo-con}). The goal here is that of devising a notion of simulation between games 
whose strategy spaces are related by some form of abstraction, in order to prove that if $\Gamma'$ simulates $\Gamma$ then 
the equilibria of $\Gamma$ are approximated by the equilibria of $\Gamma'$. 
Finally, while this paper set up the abstraction framework by using very simple abstract domains, 
the general task of designing useful and expressive abstract domains, possibly endowed with widening operators for
efficient fixed point computations, 
for specific
classes of supermodular games is left as an open issue. 

\paragraph*{Acknowledgements.}
The author has been partially supported by the Microsoft Research Software Engineering 
Innovation Foundation 2013 Award (SEIF 2013) and by 
the University of Padova under the 2014 PRAT project ``ANCORE''.

\end{document}